\documentclass{article}

\usepackage[preprint]{neurips_2026}

\usepackage[utf8]{inputenc}
\usepackage[T1]{fontenc}
\usepackage{microtype}
\usepackage{graphicx}
\usepackage{subfigure}
\usepackage{booktabs}
\usepackage{hyperref}
\usepackage{url}
\usepackage{nicefrac}
\usepackage{xcolor}
\usepackage{natbib}

\usepackage{amsmath}
\usepackage{amssymb}
\usepackage{mathtools}
\usepackage{amsthm}

\usepackage{multirow}

\usepackage{algorithm}
\usepackage{algorithmic}

\usepackage[capitalize,noabbrev]{cleveref}

\usepackage{tikz}
\usetikzlibrary{patterns}

\theoremstyle{plain}
\newtheorem{theorem}{Theorem}[section]

\newtheorem{lemma}[theorem]{Lemma}

\theoremstyle{definition}
\newtheorem{definition}[theorem]{Definition}
\newtheorem{assumption}[theorem]{Assumption}
\theoremstyle{remark}

\newcommand{\Cmin}{C_{\min}}
\newcommand{\Hsol}{H_{\text{sol}}}
\newcommand{\Ecut}{E_{\text{cut}}}
\newcommand{\Sol}{\text{Sol}}
\newcommand{\taustar}{\tau^*}

\title{How Task Structure Limits Multi-Agent Success:\\ An Information-Theoretic Analysis}

\author{%
  Shi Pan\thanks{Corresponding author.} \\
  University College London \\
  London, UK \\
  \texttt{shi.pan@ucl.ac.uk} \\
  \And
  Ming Luo \\
  University of Bristol \\
  Bristol, UK \\
  \texttt{ming.luo@bristol.ac.uk} \\
}

\begin{document}

\maketitle

\begin{abstract}
Multi-agent systems (MAS) were expected to overcome the limitation of single-agent systems (SAS) through collaboration. However, under typicality conditions on the task's constraint graph and bounded inter-agent communication, we prove that the success probability of a MAS is closely tied to the connectivity of task constraints, where each agent has limited information-processing capacity. Specifically, the success probability decays exponentially with an information bottleneck that emerges from partitioning the task's constraint graph among agents. We define this quantity as the \emph{minimum cut cost} $C_{\min}$ of the potential constraint graph of each task. This information-theoretic bound applies to both open systems with external feedback and closed systems without. We validate our theory on both synthetic experiments and real-world empirical data from SWE-bench submissions. From our framework, effective MAS design should incorporate task-inherent constraints alongside engineering optimization, and when $\Cmin$ is high, practitioners should restructure tasks rather than simply scaling agents or communication.
\end{abstract}

\section{Introduction}
\label{sec:intro}

Multi-agent systems (MAS) have garnered significant attention from both academia and industry~\citep{guo2024large,tran2025multiagent,zhou2024survey}. 
The premise is intuitive: multiple agents, each follows task oriented contexts and contributing distinct capabilities, should collectively solve problems beyond the reach of any individual~\citep{du2024improving,wu2024autogen}. 
This promise of emergent collective intelligence has fueled rapid proliferation of MAS architectures~\cite{hong2024metagpt,qian2024chatdev,wu2024autogen,chen2024agentverse}. 
Yet a puzzling pattern has emerged---MAS frequently underperform on complex tasks, and in some cases fail to match even single-agent baselines~\citep{cemri2025why,wang2024rethinking,kim2025towards,gao2025singleagent}. 
What explains this gap between expectation and reality? 
More fundamentally, are there theoretical limits that constrain what collaborative agents can achieve~\citep{kim2025towards}?

How to effectively collaborate between multiple agents is highly related with its communication strategy.
For instance, task-specific prompt show some successes on MAS which optimize prompts for individual subtasks and substantially improve agent performance at each execution step \citep{wang2024promptagent,hong2024metagpt,qian2024chatdev}. 
However, limited context windows and the risk of quality degradation with long-context force MAS can not share full conext history for all sub-agents. \citep{zhang2024chain,guo2024large,liu2024lost,du2025context}. 
In response, many MAS architectures adopt a simplifying assumption: sub-agents operate independently without direct inter-agent communication \citep{tran2025multiagent}. 
But coordination cannot be entirely eliminated—some agent must allocate tasks, resolve conflicts, and integrate partial results through local communication \citep{yokoo1998distributed}. 
Existing approaches typically circumvent this challenge by introducing a stronger model as a centralized planner to orchestrate agent behavior \citep{wu2024autogen,zhang2024coa,qian2025scaling}. 
While pragmatic, this strategy represents a compromise rather than a solution—it effectively offloads the coordination problem to a more capable single agent, sidestepping rather than resolving the fundamental question of how peer agents can collaborate effectively \citep{xi2025scaling,tran2025multiagent}. 
More critically, this coordination overhead is not merely incidental \citep{zhang2025agentprune,chen2025c2c}. 

In real-world tasks, constraints that span multiple components—requiring information to flow across boundaries—become the critical bottlenecks~\cite{chen2024scalable,tran2025multiagent,ding2024seqcomm}. The difficulty of a task is defined by these bottlenecks as they determine the irreducible coordination cost that no system design can circumvent~\cite{cemri2025why,shen2024taskbench,niu2025flow}. Centralized stronger planner does not eliminate these dependencies; it merely relocates them~\cite{chen2024scalable,yang2025agentnet}. By following this idea, recent works suggests fundamental limits of multi-agent success are thus set not by how agents are organized~\cite{hong2024metagpt,cemri2025why,tran2025multiagent}. In this paper, we suggests:

\emph{Task structure, rather than system configuration, determines some fundamental limits of MAS success}.

We formalize this insight through the \emph{ constraint graph of a task} (CGT), where edges capture how satisfying one constraint affects the feasibility of others. Partitioning CGT across agents inevitably incurs information loss. We then define the cost of partition process by using minimum information losses $\Cmin$. Within an information-theoretic framework, our central result establishes that, under typicality conditions on the constraint graph, success probability decays exponentially with this quantity: $P(\text{success}) = O(2^{-\Omega(\Cmin)})$ (Theorem~\ref{thm:scale}). We derive complementary bounds based on graph diameter (Theorem~\ref{thm:diameter}) and information budget trade-offs (Theorem~\ref{thm:budget}), together characterizing how task structure constrains the success rate of MAS. Beyond the qualitative ``coupling penalty'' intuition, our contributions are quantitative: \emph{(i)} a closed-form exponential decay rate $\Omega(\Cmin)$ tied to task structure rather than algorithmic choice; \emph{(ii)} an explicit phase-transition threshold $F + R = c \cdot \Cmin - \Hsol$ separating feasible from exponentially hard regimes (Theorem~\ref{thm:budget}); and \emph{(iii)} a precise gap between open and closed systems, showing that internal communication $M$ alone cannot bridge the information gap (Lemma~\ref{lem:conservation}).

We validate this framework through a two-level strategy. Synthetic experiments on controlled agents confirm the mathematical structure of our bounds. Empirical analysis of SWE-bench submissions involves 46 real agent systems across 300 software engineering tasks to demonstrates the qualitative relationships.

\section{Related Works}
\label{sec:relatedworks}

LLM-based MAS have become a promising paradigm for complex problem solving \citep{guo2024large,tran2025multiagent,zhou2024survey}, with frameworks such as MetaGPT \citep{hong2024metagpt}, ChatDev \citep{qian2024chatdev}, AutoGen \citep{wu2024autogen}, and AgentVerse \citep{chen2024agentverse} deployed across software engineering \citep{jimenez2024swebench}, scientific discovery \citep{lu2024aiscientist}, and embodied tasks \citep{zhang2024building}. Despite the premise of emergent collective intelligence \citep{du2024improving}, empirical evidence reveals that MAS frequently underperform single-agent baselines \citep{cemri2025why,wang2024rethinking,kim2025towards,gao2025singleagent}. Existing explanations cite coordination overhead \citep{zhang2025agentprune,chen2025c2c}, error propagation \citep{tran2025multiagent,xi2025scaling}, and role redundancy \citep{wang2024rethinking}, but identify symptoms without a unified theory of \emph{when} and \emph{why} MAS must fail.

Our framework draws on three complementary traditions. Individual agents face bounded information-processing capacity even in nominally large context windows \citep{liu2024lost,du2025context,schnabel2025lost,guo2025unable,simon1990bounded}; we abstract this as the effective capacity $W$. Distributed constraint satisfaction problems \citep{yokoo1998distributed, modi2005adopt, faltings2006distributed} and communication complexity \citep{yao1979some, kushilevitz1997communication, mezard2009information, molloy2014frozen} establish that constraint structures across agents impose irreducible coordination cost---a cost not eliminated by introducing centralized planners \citep{wu2024autogen,zhang2024coa,qian2025scaling,xi2025scaling,tran2025multiagent}. We use network information theory tools---cutset bounds \citep{thakor2016cutset, el2011network} and the data processing inequality \citep{cover2006elements}---to quantify information loss across the partition, complementing distributed information bottleneck approaches \citep{tishby1999information, wang2025tadib, ding2024magi}.

\section{Problem Formulation}
\label{sec:formulation}


\subsection{Task Representation: Constraint Graphs in Bottleneck-Centric View}

Complex real-world tasks typically decompose into manageable subproblems 
connected by coordination requirements. Our framework focuses on the 
\emph{irreducible bottleneck} $\tau_{\text{hard}}$: the core subproblem 
where multi-agent coordination is unavoidable and no exploitable special 
structure exists. This bottleneck-centric view parallels established 
analysis paradigms—critical path methods in project scheduling, min-cut 
characterizations in network flow, and frozen-core analysis in constraint 
satisfaction~\citep{kelley1959critical, mezard2009information}.

By following this direction, we represent the intrinsic constraints of a task as a constraint graph $G_C = (V, E)$, where $V$ is the set of variables to be determined, $E$ represents dependency relationships between variables, each edge $e \in E$ carries a potential function $\phi_e$ encoding the constraint with entropy $H(\phi_e)$, and $\Sol$ denotes the set of assignments satisfying all constraints. 

This graph represents the \emph{irreducible bottleneck} by following typicality conditions that capture local dependence, bounded correlation, probability concentration, and entropy concentration. While many problems directly satisfy these conditions (e.g., random k-SAT, random graph coloring)\citep{chen2004normal, xu2000exact}, which serve as sufficient conditions for tight bounds, our bound is designed to identify the true bottlenecks in complex real-world problems by filtering out inherent redundancies in practical settings. And the graph structure naturally captures constraint satisfaction problems (CSP), combinatorial optimization, and structured reasoning tasks.

\emph{Information non-redundancy (T1, T1').} Condition T1 requires that cut-edge constraints carry non-redundant information: $I(\taustar; \Ecut \mid \text{Internal}) \geq c_{T1} \cdot H(\Ecut)$. Condition T1' bounds correlations among cut edges: $H(\Ecut) \geq c_0 \cdot \sum_{e} H(\phi_e)$, preventing hidden algebraic dependencies from collapsing joint entropy.

\emph{Solution space regularity (T2, T3').} Condition T2 excludes anomalous concentration—configurations where solutions can be guessed with probability far exceeding the entropy-based prediction $2^{-H}$. This rules out ``backdoors'' that bypass information requirements. Condition T3' requires entropy concentration around its expectation, ensuring typical-case analysis reflects actual system behavior.

Typicality conditions operationalize "no exploitable structure": 
T1/T1' ensure partitioning creates genuine information loss (not 
masked by redundancy), while T2/T3' ensure the solution space lacks 
shortcuts. A subproblem failing these conditions can be simplified 
through restructuring—it is not the true bottleneck. Our bounds thus 
characterize the fundamental limit imposed by irreducible coordination 
requirements, with overall task success probability dominated by the 
bottleneck: $P(\text{success}) \leq P(\text{bottleneck success})$.  These conditions (T1, T1', T2, T3') are detailed in Appendix~\ref{app:typicality}.

\subsection{Agent Model and Task Partitioning}

Consider $m$ agents, each assigned a distinct subgraph $G_i$ (one-to-one mapping between agents and partitions) of the constraint graph. A partitioning scheme $\mathcal{P}$ divides $G_C$ into $m$ subgraphs, necessarily creating a cut edge set $\Ecut(\mathcal{P})$ consisting of edges whose endpoints belong to different agents. We define the \emph{minimum cut cost} as:
\begin{equation}
\Cmin := \min_{\mathcal{P}} \sum_{e \in \Ecut(\mathcal{P})} H(\phi_e)
\label{eq:cmin}
\end{equation}
where the minimum is over all valid $m$-partitions. Intuitively, partitioning a task severs dependency relationships, creating information loss proportional to the entropy of cut constraints. Our framework analyzes tasks at the level of their \emph{irreducible bottleneck} $\tau_{\text{hard}}$, the core subproblem that cannot be further decomposed without information loss. The bottleneck size is denoted $n_h$.

The solution entropy $H_{\text{sol}}$ quantifies solution space size; the special case $H_{\text{sol}} = 0$ (unique solution) represents maximum constraint tightness where constraints fully determine the correct answer. Our bounds incorporate solution structure through $H_{\text{sol}}$, capturing how task ambiguity interacts with partition costs to determine fundamental limits of multi-agent success.

To accurately characterize MAS with limited resources and processing capabilities, we make three assumptions that characterize the systems our bounds address.
\begin{assumption}[Limited Agent Capacity]
\label{ass:capacity}
Each agent can effectively process at most $W$ bits of task-relevant information in a single execution: $I(\taustar; \text{Info}_i^{(t)}) \leq W$. This assumption is empirically motivated by LLM context limitations.
\end{assumption}
\begin{assumption}[Limited Communication Budget]
\label{ass:communication}
Agents have finite budget for communicating between each other and obtaining feedback from environments. The system's information budget consists of: internal communication $M$ (bits exchanged between agents), external feedback $F$ (information about $\taustar$ obtained through execution), and feedback propagation $R$ (information transmitted from execution to planning layers). The minimum cut cost $\Cmin$ as defined in Equation~\ref{eq:cmin} characterizes the information bottleneck created by task partitioning.
\end{assumption}
\begin{assumption}[Low-Entropy Solution Set]
\label{ass:solution}
The solution entropy $\Hsol := \log |\Sol|$ satisfies $\Hsol = o(n_h)$, permitting polynomial or subexponential solution counts. This covers most practical tasks including multi-solution CSPs with sub-exponential solution counts.
\end{assumption}

Notation used throughout is summarised in Appendix~\ref{app:notation}.

\section{Main Theorems}
\label{sec:theorems}

We present three complementary bounds capturing different aspects of MAS limitations.
The bounds share a common foundation: \emph{task partitioning creates an irreducible
information gap} that limits success probability. We first establish
this information gap (Lemma~\ref{lem:gap_conservation}), then derive bounds from
three perspectives: the minimum cut cost (Theorem~\ref{thm:scale}), the graph
diameter (Theorem~\ref{thm:diameter}), and the information budget trade-off between
internal communication and external feedback (Theorem~\ref{thm:budget}).

To formalize the analysis, we define the total information available to the MAS:
\begin{definition}[System Information]
\label{def:allinfo}
Let $\text{Internal} := (G_1, \ldots, G_m)$ denote the initial information from
task partitioning, where each $G_i$ is agent $i$'s assigned subgraph. The total
system information is $\text{AllInfo} := (\text{Internal}, \mathcal{M}, \mathcal{F},
\mathcal{R})$, where $\mathcal{M}$ denotes inter-agent messages, $\mathcal{F}$
denotes external feedback, and $\mathcal{R}$ denotes feedback propagation.
\end{definition}

\subsection{The Information Gap: Foundation of All Bounds}

The central insight of our framework is that partitioning the constraint graph
among agents creates an information gap. When edges cross partition
boundaries, the constraints they encode become inaccessible to any single agent,
yet satisfying these constraints requires coordinated knowledge.

Consider the cut edge set $\Ecut$: these edges connect variables assigned to
different agents. The entropy $H(\Ecut)$ measures the total information content
of cross-partition constraints. Critically, no agent fully observes $\Ecut$,
and communication cannot recover this information in closed systems.

\begin{lemma}[Information Gap Conservation]
\label{lem:gap_conservation}
Under Assumptions~\ref{ass:capacity}--\ref{ass:solution} and typicality conditions T1, T1':
\begin{equation}
H(\taustar | \text{Internal}) \geq c_1 \cdot \Cmin
\end{equation}
where $c_1 = c_{T1} \cdot c_0$ depends on typicality constants.
\end{lemma}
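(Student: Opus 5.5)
The plan is to chain the two typicality conditions through an elementary entropy inequality, so that the minimum cut cost $\Cmin$ emerges from the fact that $\Ecut(\mathcal{P})$ is the cut of \emph{some} valid partition.

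\textbf{Step 1: lower-bound the conditional entropy by a mutual information.} For any random variables we have $H(\taustar \mid \text{Internal}) \geq I(\taustar; \Ecut \mid \text{Internal})$. This holds because
\[
H(\taustar \mid \text{Internal}) = I(\taustar;\Ecut\mid\text{Internal}) + H(\taustar \mid \text{Internal}, \Ecut)
\]
and the last term is nonnegative. (If $\taustar$ is discrete, nonnegativity of conditional entropy is immediate; otherwise I would read the statement in the discrete setting the paper uses throughout.) Here $\Ecut = \Ecut(\mathcal{P})$ is the cut edge set induced by the partitioning scheme $\mathcal{P}$ that produced $\text{Internal} = (G_1,\dots,G_m)$.

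\textbf{Step 2: apply the typicality conditions in sequence.} Condition T1 gives $I(\taustar;\Ecut\mid\text{Internal}) \geq c_{T1} H(\Ecut)$. Condition T1' then gives $H(\Ecut) \geq c_0 \sum_{e\in\Ecut(\mathcal{P})} H(\phi_e)$. Combining these with Step 1 yields
\[
H(\taustar\mid\text{Internal}) \geq c_{T1} c_0 \sum_{e\in\Ecut(\mathcal{P})} H(\phi_e).
\]

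\textbf{Step 3: pass to $\Cmin$.} The scheme $\mathcal{P}$ is a valid $m$-partition (one subgraph per agent). By the definition in \eqref{eq:cmin}, the sum over its cut edges is therefore at least $\Cmin$. This gives the claim with $c_1 = c_{T1}c_0$. Assumptions~\ref{ass:capacity}--\ref{ass:solution} are not needed quantitatively here. They only ensure the setting is well-defined, e.g.\ that $\taustar$ ranges over a finite solution structure so the entropies are finite.

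\textbf{Main obstacle.} The delicate point is the quantifier structure of T1 and T1'. The argument works only if both conditions hold for the cut of the partition actually used, or uniformly over all valid partitions. If T1 were stated only at a minimizing partition, the inequality for the actual $\mathcal{P}$ would not follow directly. I would first check how Appendix~\ref{app:typicality} quantifies the conditions and, if needed, read them as holding uniformly over $m$-partitions. This is the natural reading, since the lemma is meant to hold for every partitioning.

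A secondary issue is the notation $H(\Ecut)$. It should be understood as the joint entropy of the potentials $(\phi_e)_{e\in\Ecut}$, consistent with T1'. With that convention, T1 and T1' refer to the same random object and the chain in Step 2 is valid.
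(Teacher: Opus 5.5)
Your proof is correct. It follows the paper's overall chain (T1, then T1', then the minimization in \eqref{eq:cmin}) but takes a different and more economical first step.

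The paper expands $H(\taustar,\Ecut\mid\text{Internal})$ by the chain rule in both orders. It then invokes a determinism hypothesis it labels H3 ($H(\phi_e\mid\taustar)=0$, i.e.\ cut potentials are functions of the solution) to make $H(\Ecut\mid\taustar,\text{Internal})$ vanish. This gives the identity $H(\taustar\mid\text{Internal}) = H(\Ecut\mid\text{Internal}) + H(\taustar\mid\Ecut,\text{Internal})$, with $H(\Ecut\mid\text{Internal}) = I(\taustar;\Ecut\mid\text{Internal})$. From there it applies T1, T1' and the minimization exactly as you do. It carries along the residual term, asserted to be $o(n_h)$, and absorbs it under the side condition $\Cmin=\Omega(n_h)$.

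You instead use the general identity $H(\taustar\mid\text{Internal}) = I(\taustar;\Ecut\mid\text{Internal}) + H(\taustar\mid\text{Internal},\Ecut)$ and drop the nonnegative residual. This needs fewer hypotheses. You need neither determinism of the constraints, which is not explicitly among Assumptions~\ref{ass:capacity}--\ref{ass:solution}, nor the condition $\Cmin=\Omega(n_h)$. You also get the inequality exactly as stated, with no additive error term.

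What the paper's route buys is a two-sided picture. Under determinism, the information gap equals the residual uncertainty $H(\Ecut\mid\text{Internal})$ of the cut constraints up to $o(n_h)$. That pins the gap ``on'' the cut and is relevant to the tightness discussion, but it is not needed for the lower bound.

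Your caveats about the quantifiers in T1 and T1', and about reading $H(\Ecut)$ as the joint entropy of the cut potentials, match how the paper implicitly uses these conditions. It applies them at the cut of the partition actually in use, then lower-bounds that cut's cost by $\Cmin$.
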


The information gap analysis follows the data processing inequality framework \citep{cover2006elements} and cutset bound methodology \citep{thakor2016cutset, el2011network}. Full proof in Appendix~\ref{app:gap_conservation_proof}.

This lemma establishes that \emph{the information gap is determined by task structure,
not system design}. The gap $c_1 \cdot \Cmin$ represents information that no agent
possesses initially, regardless of how sophisticated the MAS architecture may be.

\subsection{Theorem 1: Scale Bound}

The Gap Conservation Lemma establishes a lower bound on conditional entropy.
To convert this into a success probability bound, we use the small-ball
probability framework: success requires guessing $\taustar$ correctly, and
the probability of correct guessing is bounded by the posterior distribution.

\begin{theorem}[Scale Bound]
\label{thm:scale}
Under Assumptions~\ref{ass:capacity}--\ref{ass:solution} and typicality conditions, for an MAS with $m$ agents:
\begin{equation}
P(\text{success}) = O\left(2^{-\Omega(\Cmin - \Hsol)}\right)
\end{equation}
where $\Cmin = \Cmin(\tau_{\text{hard}}, m)$ is the minimum cut cost of the bottleneck task under $m$-agent partitioning.
\end{theorem}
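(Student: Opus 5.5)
The plan is to turn the conditional-entropy bound of Lemma~\ref{lem:gap_conservation} into a bound on the probability of guessing $\taustar$ correctly. We use the typicality conditions T2 and T3' to pass from entropy to maximal posterior mass, and Assumption~\ref{ass:solution} to account for the fact that the MAS only needs to output \emph{some} element of $\Sol$.

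\textbf{Step 1 (success as a guessing event).} Let $Y$ denote the MAS output. It is a (possibly randomized) function of $\text{AllInfo}$, which for the scale bound we first restrict to $\text{Internal}$ together with the budgeted terms $\mathcal{M},\mathcal{F},\mathcal{R}$. Success is the event $Y\in\Sol$. Writing $\Sol$ in terms of $\taustar$, we get
\[
P(\text{success}) \le \mathbb{E}\Big[\sum_{s\in \Sol}\Pr(\taustar \text{ consistent with } s\mid \text{AllInfo})\Big] \le 2^{\Hsol}\,\mathbb{E}\big[p_{\max}(\text{AllInfo})\big],
\]
where $p_{\max}(a)=\max_t \Pr(\taustar=t\mid \text{AllInfo}=a)$. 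The union over at most $2^{\Hsol}$ acceptable answers is what produces the $-\Hsol$ correction in the exponent.

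\textbf{Step 2 (entropy to min-entropy).} We have $H_\infty(\taustar\mid \text{AllInfo})\le H(\taustar\mid\text{AllInfo})$, which runs in the wrong direction. This is exactly where T2 enters: it states that no configuration has posterior mass far exceeding $2^{-H}$. So on a typical set we have $p_{\max}\le 2^{-c_2 H(\taustar\mid \text{AllInfo}=a)}$ for a constant $c_2$. T3' gives concentration of the conditional entropy around its mean, so the atypical set contributes only $2^{-\Omega(n_h)}$, or at least $2^{-\Omega(\Cmin)}$. This term is absorbed into the $O(\cdot)$, using $\Hsol=o(n_h)$.

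\textbf{Step 3 (apply the gap lemma).} Lemma~\ref{lem:gap_conservation} gives $H(\taustar\mid\text{Internal})\ge c_1\Cmin$. For the closed-system part, messages $\mathcal{M}$ are functions of $\text{Internal}$ and private randomness, so by the data processing inequality they do not reduce $H(\taustar\mid\cdot)$. Hence $H(\taustar\mid \text{Internal},\mathcal{M})\ge c_1\Cmin$. In the scale bound the external terms are treated as bounded, $F+R\le(1-\epsilon)c_1\Cmin$, with the precise tradeoff deferred to Theorem~\ref{thm:budget}. By the chain rule this leaves $H(\taustar\mid\text{AllInfo})\ge \Omega(\Cmin)$. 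Assumption~\ref{ass:capacity} ensures that no single agent can internally absorb the cut information, since each agent's view carries at most $W$ bits. Combining the steps yields $P(\text{success})\le 2^{\Hsol}2^{-c_2c_1'\Cmin}+2^{-\Omega(\Cmin)}$, which is $O(2^{-\Omega(\Cmin-\Hsol)})$. The constants are absorbed by noting that the exponent $c\,\Cmin-\Hsol\ge c(\Cmin-\Hsol)$ for $c\le1$.

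\textbf{Main obstacle.} The hard part is Step 2, converting Shannon conditional entropy, which is an average, into a pointwise bound on the maximal posterior probability. Without T2 a posterior could have high entropy but a single heavy atom, making guessing easy. I would first try to state T2 directly as a min-entropy bound $H_\infty\ge c_2 H$ on typical realizations and verify that it is preserved under conditioning on $\text{AllInfo}$. If that fails, the fallback is Fano's inequality, which gives only a polynomial rather than exponential decay. In that case one would need T3' to boost it via a tensorization argument over approximately independent cut components, which is where the local-dependence part of typicality would be used.
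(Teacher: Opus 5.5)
Your route is the paper's: bound success by the expected maximal posterior, with the $|\Sol|\le 2^{\Hsol}$ union bound that the paper makes explicit only in Step~4 of the proof of Theorem~\ref{thm:budget}. Then use T2 to pass from $p_{\max}$ to $2^{-H}$, and T3' to discard low-entropy realizations; your typical/atypical split is the paper's annealed-entropy step $\tilde H=\Theta(\bar H)$. Lemmas~\ref{lem:gap_conservation} and~\ref{lem:conservation} supply the entropy lower bound.

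\textbf{The final absorption step fails.} The inequality $c\,\Cmin-\Hsol\ge c(\Cmin-\Hsol)$ is equivalent to $(1-c)\Hsol\le 0$, so for $c\le 1$ it runs the other way. Thus $2^{\Hsol-c\Cmin}$ is \emph{larger} than $2^{-c(\Cmin-\Hsol)}$. When $c\,\Cmin\le\Hsol<\Cmin$ your bound is vacuous while the target still decays in $\Cmin-\Hsol$. The missing ingredient is $\Hsol=o(\Cmin)$. It follows from Assumption~\ref{ass:solution} ($\Hsol=o(n_h)$) together with the regime $\Cmin=\Omega(n_h)$ in which the gap lemma absorbs its $o(n_h)$ remainder. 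Then $c\,\Cmin-\Hsol\ge (c/2)\Cmin\ge (c/2)(\Cmin-\Hsol)$ for large instances. You need the same fact for the atypical-set term. It also carries the factor $2^{\Hsol}$, and T3' with $\bar H\ge c_1\Cmin$ makes it only $2^{-\Omega(\Cmin)}$, not $2^{-\Omega(n_h)}$.

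\textbf{Your Step~2 uses a T2 the paper does not assume.} The paper's T2 is averaged with polynomial slack, $\mathbb{E}[\max_x P(\taustar=x\mid\mathbf{i})]\le n^{d_0}\,\mathbb{E}[2^{-H(\taustar\mid\mathbf{i})}]$. It gives no pointwise bound $p_{\max}\le 2^{-c_2H}$ on a typical set. The repair is to apply T2 in expectation first and then run your split on $\mathbb{E}[2^{-H}]$. That leaves the prefactor $n^{d_0}$, and absorbing it requires $\Cmin=\omega(d_0\log n)$, a condition the paper states and you omit. This also makes your Fano/tensorization fallback unnecessary.

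\textbf{The open-system extension needs one more hypothesis.} Restricting open systems to $F+R\le(1-\epsilon)c_1\Cmin$ is a sensible addition, since the paper's proof effectively treats only the closed case. But the chain rule then bounds only the \emph{average} $H(\taustar\mid\text{AllInfo})$, while T3' concerns the $\text{Internal}$-conditioned entropy. In the closed case the transfer is automatic: $\taustar$ is conditionally independent of $\mathcal{M}$ given $\text{Internal}$, so the posteriors coincide pointwise. In the open case you would need concentration of the $\text{AllInfo}$-conditioned entropy as a separate hypothesis.
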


\paragraph{Proof Architecture.}
The proof proceeds in three steps:

\textbf{Step 1 (Max posterior bound):} Success probability is bounded by the
expected maximum posterior: $P(\text{success}) \leq \mathbb{E}[\max_x P(\taustar = x
| \text{AllInfo})]$. This follows because any correct answer must have non-zero
posterior probability.

\textbf{Step 2 (Entropy-to-probability conversion):} Under condition T2 (no
anomalous concentration), the maximum posterior is polynomially related to
entropy: $\mathbb{E}[\max_x P(\taustar = x | \text{AllInfo})] \leq n^{d_0} \cdot
\mathbb{E}[2^{-H(\taustar | \text{AllInfo})}]$. This excludes ``backdoors'' where
solutions can be guessed far better than entropy predicts.

\textbf{Step 3 (Annealed entropy framework):} Define annealed entropy $\tilde{H}
:= -\log \mathbb{E}[2^{-H}]$. Under T3' (entropy concentration), $\tilde{H} =
\Theta(\bar{H})$ where $\bar{H}$ is the expected conditional entropy. By
Lemma~\ref{lem:gap_conservation}, $\bar{H} \geq c_1 \cdot \Cmin$, yielding:
\begin{equation}
P(\text{success}) \leq n^{d_0} \cdot 2^{-\tilde{H}} \leq n^{d_0} \cdot 2^{-c_2 \cdot c_1 \cdot \Cmin}
\end{equation}

When $\Cmin = \omega(d_0 \cdot \log n)$, the polynomial factor is absorbed into
the exponential. For the multi-solution case with $\Hsol > 0$, the effective gap
becomes $\Cmin - \Hsol$, yielding the general form. Full proof in Appendix~\ref{app:theorem1}.

For the important special case of closed systems with unique solutions ($\Hsol = 0$),
this simplifies to $P(\text{success}) = O(2^{-\Omega(\Cmin)})$.

\subsection{Theorem 2: Diameter Bound}

While Theorem 1 captures information loss through partition cuts, an alternative
perspective considers information decay along paths. When the constraint graph has
large diameter, information about distant variables must traverse many agent
boundaries, decaying exponentially at each step.

\paragraph{Information Processing Model.}
Agent information processing is modeled as a noisy channel. The Strong Data
Processing Inequality (SDPI) constant characterizes information retention:
\begin{equation}
\eta(K) = \sup_{p_U, p_{X|U}} \frac{I(U;Y)}{I(U;X)} < 1
\end{equation}
for any non-trivial channel $K: \mathcal{X} \to \mathcal{Y}$. Classical examples:
additive Gaussian noise gives $\eta = \text{SNR}/(1+\text{SNR})$; binary symmetric
channel with error $p$ gives $\eta = (1-2p)^2$; erasure channel gives $\eta = 1-\epsilon$.

\begin{lemma}[Path Information Decay]
\label{lem:path_decay}
For a path $P = (v_0, \ldots, v_\ell)$ with SDPI constant $\eta < 1$:
\begin{equation}
I(\taustar_{v_0}; \text{Info}_{v_\ell}) \leq \eta^\ell \cdot I(\taustar_{v_0}; \text{Info}_{v_0})
\end{equation}
\end{lemma}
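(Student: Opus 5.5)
The natural route is to model information flow along the path as a Markov chain and apply the SDPI once per hop. Concretely, I would take the agents (or partition cells) visited along $P=(v_0,\ldots,v_\ell)$ and assume, as the Information Processing Model implicitly does, that whatever node $v_{k+1}$ learns about $\taustar_{v_0}$ is obtained only through what node $v_k$ holds. In other words,
\[
\taustar_{v_0} \to \text{Info}_{v_0} \to \text{Info}_{v_1} \to \cdots \to \text{Info}_{v_\ell}
\]
is a Markov chain, and each transition $\text{Info}_{v_k}\mapsto \text{Info}_{v_{k+1}}$ is realized by a channel $K_k$ with SDPI constant $\eta(K_k)\le \eta<1$. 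The first step is to state this conditional-independence structure explicitly. I would justify it by the locality of the constraint graph: $v_{k+1}$ interacts with $\taustar_{v_0}$ only through its neighbour on the path. I would also note that this is an assumption on the path being the only route, or else on $\text{Info}_{v_\ell}$ meaning the information relayed along $P$.

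The second step is the one-hop inequality. Set $U=\taustar_{v_0}$, $X=\text{Info}_{v_k}$ and $Y=\text{Info}_{v_{k+1}}$. Because $U\to X\to Y$ is Markov and $Y$ is the output of $K_k$ on input $X$, the definition of $\eta(K_k)$ as a supremum over $p_U,p_{X|U}$ gives
\[
I(\taustar_{v_0};\text{Info}_{v_{k+1}})\le \eta\, I(\taustar_{v_0};\text{Info}_{v_k}).
\]
The supremum form is exactly what allows the bound to hold for whatever joint law of $(U,X)$ has been induced by the earlier hops.

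The third step is a simple induction on $k$ from $0$ to $\ell-1$, which multiplies the factors and yields $I(\taustar_{v_0};\text{Info}_{v_\ell})\le \eta^\ell I(\taustar_{v_0};\text{Info}_{v_0})$. If the channels differ from hop to hop, the same argument gives $\prod_k\eta(K_k)\le\eta^\ell$.

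The main obstacle is the Markov property rather than any computation. In a general constraint graph, $v_\ell$ may receive information about $v_0$ along several paths, or through correlations in the task prior. If so, the chain structure fails and the bound can be violated. I would handle this by restricting the lemma to information propagated along $P$, for example by conditioning on or removing other routes, or by invoking the bounded-correlation typicality condition T1' to argue that correlations off the path are negligible. Anything beyond this I would defer to the appendix treatment in which the diameter bound aggregates over paths.
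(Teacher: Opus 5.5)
Your proposal is correct and is essentially the paper's proof. The paper likewise posits the Markov chain $\taustar_{v_0} \to \text{Info}_{v_0} \to \text{Info}_{v_1} \to \cdots \to \text{Info}_{v_\ell}$, applies the SDPI definition once per hop to get $I(\taustar_{v_0}; \text{Info}_{v_{i+1}}) \leq \eta \, I(\taustar_{v_0}; \text{Info}_{v_i})$, and iterates $\ell$ times. You are right that the Markov structure is an assumption rather than something derived; the paper simply asserts it. One caution on your proposed fixes: T1' only bounds the joint entropy of the cut edges, so it would not by itself supply the needed conditional independence. Restricting the lemma to information relayed along $P$ is the right way to secure it.
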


\begin{theorem}[Diameter Bound]
\label{thm:diameter}
Under the same assumptions, let $D$ be the constraint graph diameter, $\eta < 1$ the SDPI constant, and $\bar{H}_e$ the average edge entropy. Then:
\begin{equation}
P(\text{success}) \leq O\left(2^{-\Omega(D \cdot (1-\eta) \cdot \bar{H}_e)}\right)
\end{equation}
\end{theorem}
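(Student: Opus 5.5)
We follow the Theorem~\ref{thm:scale} architecture, but lower-bound the residual conditional entropy $H(\taustar \mid \text{AllInfo})$ using Lemma~\ref{lem:path_decay} instead of Lemma~\ref{lem:gap_conservation}. Steps 1--3 of the scale-bound proof (max-posterior bound, the T2 entropy-to-probability conversion, and the annealed-entropy step under T3') go through unchanged. It therefore suffices to show that the expected residual entropy satisfies $\bar{H} \geq c \cdot D(1-\eta)\bar{H}_e$ for some constant $c>0$. The bound then becomes $P(\text{success}) \leq n^{d_0} 2^{-c' D(1-\eta)\bar{H}_e}$, and the polynomial factor is absorbed exactly as in Theorem~\ref{thm:scale}.

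\textbf{Path decomposition.} Fix a pair $(u,w)$ realizing the diameter and a shortest path $P=(v_0,\ldots,v_D)$ between them. Along $P$, consecutive edges $e_j=(v_{j-1},v_j)$ carry constraint information of average size $\bar{H}_e$. Under typicality (T1/T1'), $H(\phi_{e_1},\ldots,\phi_{e_D}) \geq c_0 \sum_j H(\phi_{e_j})$, which is $\Theta(D\bar{H}_e)$ up to replacing path edge entropies by the average. Correctly determining $\taustar$ requires resolving all $D$ edge constraints jointly.

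\textbf{Per-hop loss.} Consider the agent or location that must ultimately commit to the assignment, say near $v_D$. The information it holds about constraint $e_j$ originated at distance about $D-j$ and passed through that many processing hops. By Lemma~\ref{lem:path_decay}, the retained information about $\phi_{e_j}$ is at most $\eta^{D-j} H(\phi_{e_j})$, so the lost amount is at least $(1-\eta^{D-j})H(\phi_{e_j})$. Summing over $j$ and using $1-\eta^k \geq 1-\eta$ for every $k\geq 1$ gives
\begin{equation*}
H(\taustar \mid \text{AllInfo}) \;\gtrsim\; c_{T1}c_0 \sum_{j<D} (1-\eta^{D-j}) H(\phi_{e_j}) \;\geq\; c\,(D-1)(1-\eta)\bar{H}_e .
\end{equation*}
The factor $c_{T1}$ enters through T1: the lost edge information is non-redundant given Internal, so it genuinely remains as conditional entropy of $\taustar$. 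Subtracting $\Hsol = o(n_h)$ handles multiple solutions, as in the scale bound.

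\textbf{Main obstacle.} The hard step is turning per-edge mutual-information decay into a lower bound on the \emph{joint} conditional entropy of $\taustar$ given all of AllInfo. Lemma~\ref{lem:path_decay} controls information flowing along one path. However, information about $e_j$ could also reach the decision point through other paths, or through external feedback $\mathcal{F}$. My first attempt would restrict to the decision made at a single endpoint and argue that every route from $e_j$ has length at least $\operatorname{dist}(e_j, v_D)$, since $P$ is a shortest path. That keeps the $\eta^{\operatorname{dist}}$ decay per route. To prevent aggregation across routes from blowing up the total, I would bound the contribution of many routes using the capacity cap $W$ from Assumption~\ref{ass:capacity} and the SDPI tensorization property. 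Feedback $F,R$ would either be absorbed into the constants or treated as an additive budget term, anticipating Theorem~\ref{thm:budget}.
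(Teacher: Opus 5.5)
There is a genuine gap, and it is in the step your reduction rests on. The target $\mathbb{E}[H(\taustar \mid \text{AllInfo})] \geq c\,D(1-\eta)\bar{H}_e$ is not merely hard to prove; it is false in the paper's model.

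\begin{itemize}
\item AllInfo contains $\text{Internal}=(G_1,\ldots,G_m)$, and each $G_i$ holds the potentials of all edges internal to agent $i$. So $H(\phi_{e_j}\mid\text{AllInfo})=0$ for every non-cut edge of your diameter path.
\item Lemma~\ref{lem:path_decay} bounds what one location $v_D$ retains. It says nothing about the union of all agents' information. Internal edges never need to be routed anywhere, so the multi-route aggregation you flag as the main obstacle (and propose to handle with $W$ and SDPI tensorization) is not the real issue.
\item Quantitatively, AllInfo contains Internal, so $H(\taustar\mid\text{AllInfo}) \leq H(\taustar\mid\text{Internal})$. The decomposition in the proof of Lemma~\ref{lem:gap_conservation} gives $H(\taustar\mid\text{Internal}) = H(\Ecut\mid\text{Internal}) + H(\taustar\mid\Ecut,\text{Internal}) \leq H(\Ecut)+o(n_h)$.
\item Take a path graph split into $m$ contiguous blocks, which is exactly the regime $\Cmin=O(m)$, $D=\Theta(n)$ that motivates Theorem~\ref{thm:diameter}. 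There the residual entropy is capped at $O(m\bar{H}_e)+o(n_h)$, far below $D(1-\eta)\bar{H}_e$.
\item Your appeal to T1/T1' for the factor $c_{T1}c_0$ fails for the same reason. Those conditions are posited only for $\Ecut$, and for an internal edge $I(\taustar;\phi_e\mid\text{Internal})=0$.
\end{itemize}

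Any argument routed through $H(\taustar\mid\text{AllInfo})$ is therefore capped by the cut entropy of the partition in use. It can only reproduce a Theorem~\ref{thm:scale}-type bound, which is what the diameter bound is meant to improve on.

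The missing idea is to make the argument local to one agent. Your \emph{restrict to a single endpoint} remark gestures at this, and it is what the paper does; it never bounds $H(\taustar\mid\text{AllInfo})$. Its chain of steps is:
\begin{itemize}
\item A pigeonhole step finds $u',v'$ on the diameter path in different partitions with $d(u',v')\geq D/(2m)$.
\item Lemma~\ref{lem:path_decay} gives $I(\taustar_{u'};\text{Info}_{v'})\leq\eta^{D/(2m)}H(\taustar_{u'})$. Hence the agent holding $v'$ has $H(\taustar_{u'}\mid\text{Info}_{S_j})\geq(1-\eta^{D/(2m)})H(\taustar_{u'})$.
\item This residual uncertainty is converted into a success bound via Fano.
\item The inequality $\eta^{D/(2m)}\leq 2^{-D(1-\eta)/(2m\ln 2)}$ supplies the $(1-\eta)$ factor.
\item $\bar{H}_e$ enters through $H(\taustar_{u'})=\Theta(\bar{H}_e)$, and $m$ is absorbed into the $\Omega$.
\end{itemize}

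If you switch to this local view, Steps 1--3 of the scale bound no longer carry over unchanged:
\begin{itemize}
\item You condition on one agent's information rather than on AllInfo, so T3', which is stated for Internal, does not directly apply.
\item You must tie the success event to that single agent's estimate.
\item Your additive sum of per-hop losses over all $D$ path edges is no longer justified. Success does not require the endpoint agent to resolve constraints that other agents hold in full.
\end{itemize}
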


This bound is tighter than Theorem 1 for sparse graphs with large diameter (e.g.,
path graphs where $\Cmin = O(m)$ but $D = \Theta(n)$).

\subsection{Theorem 3: Information Budget Trade-off}

A natural question arises: can agents overcome the information gap by communicating
more? Surprisingly, the answer is \emph{no} for closed systems---internal
communication cannot create new information about the solution, only redistribute
existing information.

\begin{lemma}[Communication Information Conservation]
\label{lem:conservation}
In closed systems ($F = R = 0$), for any number of communication rounds $T$:
\begin{equation}
I(\taustar; \text{Internal}, \mathcal{M}^{(1:T)}) = I(\taustar; \text{Internal})
\end{equation}
Communication redistributes but cannot create information about $\taustar$.
\end{lemma}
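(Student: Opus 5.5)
The plan is to prove Lemma~\ref{lem:conservation} by induction on the number of rounds $T$, using the chain rule for mutual information and the structure of a closed system. In a closed system each message an agent sends in round $t$ is produced by the agent's own processing. That processing depends only on the agent's local information, namely its subgraph $G_i$ together with the messages it has received so far, plus possibly private randomness that is independent of $\taustar$. Since $F=R=0$, no channel carries information from the environment, and hence from $\taustar$, into the system. Formally, I will model the round-$t$ message vector as $\mathcal{M}^{(t)} = g_t(\text{Internal}, \mathcal{M}^{(1:t-1)}, U_t)$ for measurable maps $g_t$ and randomness $U_t$ independent of $(\taustar, \text{Internal}, \mathcal{M}^{(1:t-1)})$. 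This gives the Markov chain
\[
\taustar \to (\text{Internal}, \mathcal{M}^{(1:t-1)}) \to \mathcal{M}^{(t)}.
\]

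The inductive step proceeds as follows. By the chain rule,
\[
I(\taustar; \text{Internal}, \mathcal{M}^{(1:t)}) = I(\taustar; \text{Internal}, \mathcal{M}^{(1:t-1)}) + I(\taustar; \mathcal{M}^{(t)} \mid \text{Internal}, \mathcal{M}^{(1:t-1)}).
\]
The second term vanishes by the Markov chain above, which is the conditional form of the data processing inequality \citep{cover2006elements}. Iterating from $t=T$ down to $t=1$, with base case $T=0$ trivial, yields the equality. The lower inequality $\geq$ also holds on its own from monotonicity of mutual information, so the content of the lemma is really the upper bound. As a corollary I will note that $H(\taustar\mid \text{Internal},\mathcal{M}^{(1:T)}) = H(\taustar\mid\text{Internal}) \geq c_1\Cmin$ by Lemma~\ref{lem:gap_conservation}, which is what feeds into Theorem~\ref{thm:budget}.

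The main obstacle is justifying the Markov chain rigorously, i.e.\ making precise what ``closed'' means. One subtlety is that agents may act in several rounds, so their outputs depend on previously received messages. This is handled by conditioning on the full history $\mathcal{M}^{(1:t-1)}$ rather than on each agent's local view; because each agent's local view is a function of $(\text{Internal}, \mathcal{M}^{(1:t-1)})$, the chain still holds. A second subtlety is that the agents' randomness (for example LLM sampling) must be independent of $\taustar$ given the system's state; I will state this explicitly as the definition of a closed system. Note also that the bounded-capacity Assumption~\ref{ass:capacity} is not needed for the equality. It only matters if one wants the stronger per-agent statement, which I would not attempt here.
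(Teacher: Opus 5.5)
Your proposal is correct and follows essentially the same route as the paper: you write each round's messages as a function of $(\text{Internal}, \mathcal{M}^{(1:t-1)})$ and private randomness, expand by the chain rule, and show inductively that each term $I(\taustar; \mathcal{M}^{(t)} \mid \text{Internal}, \mathcal{M}^{(1:t-1)})$ vanishes. Your requirement that $U_t$ be jointly independent of $(\taustar, \text{Internal}, \mathcal{M}^{(1:t-1)})$ is the right hypothesis, and it is sharper than the paper's stated $R^{(t)} \perp \taustar$: marginal independence alone does not make the conditional term vanish, as shown by taking $R^{(1)} = \taustar \oplus Z$ with $Z$ a uniform bit in $\text{Internal}$ that is independent of $\taustar$.
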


\paragraph{Protocol Scope.}
This lemma applies to \emph{non-adaptive, simultaneous-message} protocols where each agent's
message at round $t$ depends only on its local information and previously received messages,
not on the content of messages being sent in the same round by other agents.

This lemma reveals a \emph{fundamental asymmetry}: internal communication $M$
cannot bridge the information gap, but external feedback $F$ can. When an agent
executes a candidate solution and receives error feedback (e.g., ``test failed at
line 3''), this provides direct information about $\taustar$ that breaks the
closed-system Markov structure.

\begin{theorem}[Information Budget Trade-off]
\label{thm:budget}
Under the same assumptions with feedback bounds (F1, F2):

\textbf{Closed systems} ($F = R = 0$):
\begin{equation}
P(\text{success}) \leq O\left(2^{-c \cdot \Cmin + \Hsol}\right)
\end{equation}
Note: the bound is \emph{independent of communication volume $M$}.

\textbf{Open systems} ($F, R > 0$):
\begin{equation}
P(\text{success}) \leq O\left(2^{-\max(0, c \cdot \Cmin - F - R - \Hsol)}\right)
\end{equation}
where $c > 0$ depends on typicality constants.
\end{theorem}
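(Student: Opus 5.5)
The plan is to reuse the three-step architecture of Theorem~\ref{thm:scale}, conditioning on the full system information $\text{AllInfo} = (\text{Internal}, \mathcal{M}, \mathcal{F}, \mathcal{R})$ and tracking how much $\mathcal{M}$, $\mathcal{F}$, $\mathcal{R}$ can reduce the conditional entropy below the gap of Lemma~\ref{lem:gap_conservation}. The central quantity is $H(\taustar \mid \text{AllInfo})$. Once we have a lower bound of the form $\bar H \geq c_1 \Cmin - F - R$ on its expectation, Steps 1--3 of the Scale Bound argument convert it into a success probability bound. Step 1 is the max-posterior bound, Step 2 uses T2 with the polynomial factor $n^{d_0}$, and Step 3 uses T3' annealed entropy concentration, which gives $\tilde H = \Theta(\bar H)$. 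The solution entropy $\Hsol$ then enters exactly as in Theorem~\ref{thm:scale}: success only requires hitting some element of $\Sol$, so the union bound over $|\Sol| = 2^{\Hsol}$ solutions multiplies the guessing probability by $2^{\Hsol}$.

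\textbf{Closed case.} With $F = R = 0$, Lemma~\ref{lem:conservation} gives $I(\taustar; \text{Internal}, \mathcal{M}^{(1:T)}) = I(\taustar; \text{Internal})$. Hence $H(\taustar \mid \text{Internal}, \mathcal{M}) = H(\taustar \mid \text{Internal}) \geq c_1 \Cmin$ by Lemma~\ref{lem:gap_conservation}, and this holds for every $T$ and every message length. Feeding this into Steps 1--3 gives
\[
P(\text{success}) \leq 2^{\Hsol} \cdot n^{d_0} \cdot 2^{-c_2 c_1 \Cmin}.
\]
Setting $c = c_1 c_2$, the factor $n^{d_0}$ is absorbed into the $O(\cdot)$ whenever $\Cmin = \omega(\log n)$. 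In the remaining regime the bound is trivial anyway. Independence from $M$ follows immediately, since $M$ never appears.

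\textbf{Open case.} Use the chain rule:
\[
H(\taustar \mid \text{Internal}, \mathcal{M}, \mathcal{F}, \mathcal{R}) = H(\taustar \mid \text{Internal}, \mathcal{M}) - I(\taustar; \mathcal{F}, \mathcal{R} \mid \text{Internal}, \mathcal{M}).
\]
Then bound the subtracted mutual information by $F + R$. This is where the feedback bounds (F1, F2) enter. I would read F1 as saying that each feedback transcript carries at most $F$ bits about $\taustar$ conditional on everything already known, and F2 as saying that the propagation channel carries at most $R$ bits. The inequality $I(\taustar; \mathcal{F}, \mathcal{R} \mid \cdot) \leq I(\taustar; \mathcal{F} \mid \cdot) + I(\taustar; \mathcal{R} \mid \cdot, \mathcal{F})$ splits the term, and each piece is bounded by the corresponding budget. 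Messages interleaved with feedback rounds must also be handled. For this I would apply Lemma~\ref{lem:conservation} round by round: between feedback events, communication is a function of the current local information, so it adds no information about $\taustar$, and only the feedback increments accumulate. This yields
\[
\bar H \geq \max\bigl(0,\; c_1 \Cmin - F - R\bigr),
\]
and Steps 1--3 together with the $2^{\Hsol}$ factor give the stated bound.

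\textbf{Main obstacle.} The hard step is the interleaving argument. Once $\mathcal{F}$ depends on $\taustar$, messages sent after feedback are no longer functions of $\text{Internal}$ alone, so Lemma~\ref{lem:conservation} does not apply verbatim. My first attempt would be an induction on rounds. At each round, the message is a deterministic (or independently randomized) function of the agent's accumulated view, so it contributes zero conditional mutual information given that view, and only the feedback increments count. A second, related difficulty is that Step 3 needs T3' concentration for the \emph{feedback-conditioned} entropy, with the constant $c_2$ unchanged. I would handle this by assuming, or checking, that T3' is stated conditionally on AllInfo. Otherwise I would apply the Markov inequality to $F + R$, at the cost of only a constant in the exponent.
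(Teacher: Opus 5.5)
Your proposal is correct and follows the paper's proof essentially step for step. Like you, the paper decomposes $I(\taustar;\text{AllInfo})$ by the chain rule, uses Lemma~\ref{lem:conservation} to remove the $\mathcal{M}$ term in the closed case, and bounds the feedback terms by F1/F2 to get $H(\taustar \mid \text{AllInfo}) \geq c_1\Cmin - F - R$. It then converts this to a success bound via $P(\mathcal{S}) \leq |\Sol| \cdot \mathbb{E}[L(\text{AllInfo})]$ together with the small-ball framework.

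Your round-by-round handling of messages interleaved with feedback is more careful than the paper. Its fixed ordering $(\text{Internal},\mathcal{M},\mathcal{F},\mathcal{R})$ silently treats $I(\taustar;\mathcal{M}\mid\text{Internal})$ as zero in the open case, although Lemma~\ref{lem:conservation} only covers closed systems. You should drop your Markov-inequality fallback for T3', however. Markov gives only a $1/K$ tail on the entropy reduction, not an exponential one, so you genuinely need T3' for the AllInfo-conditioned entropy. That is exactly what the paper assumes in Step~3 of the proof of Theorem~\ref{thm:scale}.
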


\paragraph{Phase Transition.}
The open system bound reveals a sharp phase transition at threshold $F + R = c
\cdot \Cmin - \Hsol$:
\begin{itemize}
\item \textbf{Below threshold}: Success probability decays exponentially in the
residual gap $(c \cdot \Cmin - F - R - \Hsol)$
\item \textbf{At/above threshold}: The exponent becomes non-positive; success
becomes possible (though not guaranteed)
\end{itemize}

The total information budget decomposes as:
\begin{equation}
I(\taustar; \text{AllInfo}) \leq I(\taustar; \text{Internal}) + F + R
\end{equation}
where by Gap Conservation, $I(\taustar; \text{Internal}) \leq H(\taustar) - c_1
\cdot \Cmin$. External feedback directly compensates for the partition-induced
information gap. Full proof in Appendix~\ref{app:theorem3_proof}.

The practical implication is stark: for closed systems with high $\Cmin$, adding
communication bandwidth provides \emph{zero asymptotic benefit}. Resources should
instead focus on task restructuring or incorporating external feedback mechanisms.

\paragraph{Multi-Solution Case.}
Under Assumption~\ref{ass:solution} (low-entropy solution set, $\Hsol = o(n_h)$),
the effective information gap becomes $c \cdot \Cmin - \Hsol$. When multiple valid
solutions exist ($|\Sol| > 1$), finding \emph{any} valid solution requires less
information than finding a unique solution. The $\Hsol$ term quantifies this
reduction: with $2^{\Hsol}$ valid solutions, the search space effectively contracts.
The detailed derivation appears in Appendix~\ref{app:theorem3}, Step 3.

\begin{table}[t]
\caption{Summary of main theorems.}
\label{tab:theorems}
\vskip 0.1in
\begin{center}
\begin{small}
\begin{tabular}{lccc}
\toprule
Theorem & Focus & Parameter & Applies \\
\midrule
Scale (1) & Cut & $\Cmin$ & All \\
Diameter (2) & Path & $D$, $\eta$ & Sparse \\
Budget (3) & Feedback & $F$, $R$ & Open \\
\bottomrule
\end{tabular}
\end{small}
\end{center}
\vskip -0.1in
\end{table}

\paragraph{Theorem Selection.}
The three bounds are complementary: Theorem 1 (Scale Bound) applies universally
and is tightest for dense graphs; Theorem 2 (Diameter Bound) is tighter for sparse
graphs with large diameter; Theorem 3 (Budget Trade-off) characterizes the
phase transition for open systems. In practice, compute both $\Cmin$ and $D$ to
determine which bound provides the tighter constraint.

\section{Experimental Validation}
\label{sec:experiments}

We employ a two-level validation strategy: \textbf{synthetic experiments} test numerical consistency under controlled conditions where agents strictly satisfy our behavioral assumptions, while \textbf{empirical experiments} assess practical predictive power on real LLM agent systems.

\subsection{Experimental Setup}
\label{sec:exp_setup}

\paragraph{Synthetic Experiments.} We implement graph coloring on constraint graphs as our testbed. \textit{Agent model:} MockAgents strictly satisfy Assumptions~\ref{ass:capacity}--\ref{ass:solution}, sampling independently and uniformly for unknown constraints without implicit coordination. \textit{Graph structures:} Path, cycle, star, grid, complete, and bipartite graphs. \textit{Parameters:} $n \in \{6, 8, 10, 12, 16\}$ nodes, $k \in \{3, 4, 5\}$ colors, $m \in \{2, 3, 4\}$ agents. \textit{Trials:} 200--1000 per configuration. \textit{Metrics:} Success rate, information gap $\Delta H = \sum_i H(C_i) - H(C_{\text{global}})$, and constant $\kappa = P_{\text{exp}}/P_{\text{theory}}$.

\paragraph{Empirical Experiments (SWE-bench).} We analyze 46 agent systems (18 MAS, 18 single-agent, 10 non-agentic baselines) on SWE-bench Lite (300 software engineering tasks), totaling 14,891 result records. $\Cmin$ is estimated from ground-truth patches at line-level granularity on code dependency graphs, with statement-, expression-, and token-level as robustness checks. The full computation pipeline---graph construction, weighting scheme, partition count, statistical tests, and reproducibility configuration---is detailed as Algorithm~\ref{alg:cmin_swebench} in Appendix~\ref{app:cmin_pipeline}. \textit{Metrics:} Correlation coefficients (Pearson $r$, Spearman $\rho$), exponential fit parameters, and threshold detection. Code will be released upon acceptance.

\subsection{Results}
\label{sec:results}

\paragraph{Theorem 1 Validation (Scale Bound).} For path graphs with $k=3$ colors, the theoretical prediction $\Delta H = \Cmin \cdot \log_2(k/(k-1))$ yields constant $c = 0.585$ bits per cut edge. The measured information gap matches this analytical prediction exactly ($R^2 = 1.0$, exact constant match; Table~\ref{tab:synthetic}). Graph structure verification confirms all six $\Cmin$ formulas match exactly (Table~\ref{tab:graph_structure}). Constant stability analysis yields $\kappa = 0.98 \pm 0.03$ across 13 configurations with no systematic drift ($R^2 = 0.004$). \textit{Key insight:} Under strictly satisfied assumptions, theoretical predictions are mathematically exact.

\begin{table}[t]
\caption{Synthetic validation: Information gap scales linearly with cut cost.}
\label{tab:synthetic}
\vskip 0.1in
\begin{center}
\begin{small}
\begin{sc}
\begin{tabular}{cccc}
\toprule
$\Cmin$ & $\Delta H$ & $\Delta H / \Cmin$ & Trials \\
\midrule
1 & 0.585 & 0.585 & 40 \\
2 & 1.170 & 0.585 & 40 \\
3 & 1.755 & 0.585 & 10 \\
\bottomrule
\end{tabular}
\end{sc}
\end{small}
\end{center}
\vskip -0.1in
\end{table}

\begin{table}[t]
\caption{Graph structure $\Cmin$ verification. All formulas match exactly.}
\label{tab:graph_structure}
\vskip 0.1in
\begin{center}
\begin{small}
\begin{tabular}{lcccc}
\toprule
Graph & Formula & Theory & Actual & Match \\
\midrule
Path & $m-1$ & 1 & 1 & \checkmark \\
Cycle & $2$ (for $m=2$) & 2 & 2 & \checkmark \\
Star & $n-1-\text{leaves}$ & 4 & 4 & \checkmark \\
Grid & $\min(r,c)$ & 2 & 2 & \checkmark \\
Complete & $\lfloor n/2 \rfloor \cdot \lceil n/2 \rceil$ & 16 & 16 & \checkmark \\
Bipartite & $a \cdot b$ & 16 & 16 & \checkmark \\
\bottomrule
\end{tabular}
\end{small}
\end{center}
\vskip -0.1in
\end{table}

\paragraph{Theorem 3 Validation (Budget Trade-off).} On a $4 \times 5$ grid with $k=5$ colors ($\Cmin=4$), closed systems show success rate stable at $\approx 32\%$ regardless of communication budget $M$ ($M=0$: 31.7\%, $M=100$: 32.0\%), confirming Lemma~\ref{lem:conservation}. Open systems exhibit a sharp phase transition: success jumps from 31.8\% at $F=0$ to 93.2\% at $F=1.6$ bits, reaching 100\% for $F \geq 3.2$ bits. The critical point ($F_{\text{crit}} \approx 1.29$ bits) aligns with the theoretical threshold (Figure~\ref{fig:open_closed}). \textit{Key insight:} Communication redistributes but cannot create information; external feedback breaks this barrier.

\begin{figure}[t]
\centering
\includegraphics[width=\columnwidth]{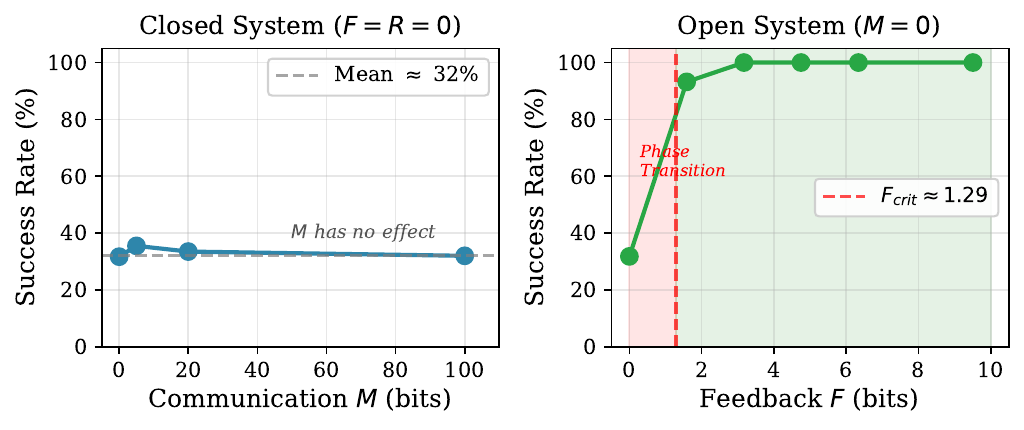}
\caption{Open vs closed system comparison on grid graph ($\Cmin=4$). \textbf{Left:} Closed system success rate vs communication $M$---remains flat at $\approx 32\%$. \textbf{Right:} Open system success rate vs feedback $F$---sharp phase transition at $F \approx 1.3$ bits.}
\label{fig:open_closed}
\end{figure}

\paragraph{Empirical Validation (SWE-bench).} We test whether the \emph{qualitative relationships}
predicted by our theorems hold for real-world software engineering tasks. Since sufficient conditions for tight bounds cannot be fully verified for code tasks, we interpret these results as testing directional predictions rather than quantitative bounds. All four qualitative predictions show statistically
significant effects in the predicted direction (Table~\ref{tab:swe}): (P1) negative correlation $r = -0.33$ ($p < 0.001$); (P2) MAS disadvantage at high $\Cmin$: $-2.3\%$ ($p = 0.017$); (P3) exponential decay $R^2 = 0.11$ ($p < 0.001$); (P4) threshold at $\Cmin \approx 6.0$ with $\rho = -0.70$ ($p = 0.025$). The moderate $R^2 = 0.11$ for exponential decay (P3) reflects several factors: our bounds are
\emph{upper bounds} (actual performance can fall below), LLM agents may exhibit implicit
coordination through shared training that violates T1, and crucially, T2 (no anomalous concentration)
is not verified for code tasks---certain code patterns may serve as ``backdoors'' that allow
solution guessing beyond entropy-based predictions. These results validate the \emph{qualitative
structure} (correlations, thresholds, comparative dynamics) while acknowledging that quantitative tightness requires verification of sufficient conditions.

\begin{table}[t]
\caption{SWE-bench: Directional predictions show statistically significant effects.}
\label{tab:swe}
\vskip 0.1in
\begin{center}
\begin{small}
\begin{tabular}{lccc}
\toprule
Prediction & Theory & Result & $p$ \\
\midrule
P1: Neg. correlation & Thm 1 & $r=-0.33$ & $<0.001$ \\
P2: No MAS gain & Thm 3 & $-2.3\%$ & $0.017$ \\
P3: Exp. decay & Thm 1 & $R^2=0.11$ & $<0.001$ \\
P4: Threshold & Budget & $\rho=-0.70$ & $0.025$ \\
\bottomrule
\end{tabular}
\end{small}
\end{center}
\vskip -0.1in
\end{table}

\paragraph{Validation Summary.} Table~\ref{tab:validation_summary} summarizes results across all experiments. Synthetic experiments confirm mathematical exactness ($R^2 = 1.0$, exact formula matches, stable constants). Empirical experiments show effects in the predicted direction with statistical significance ($p < 0.05$ for all), though effect sizes are modest (e.g., $r = -0.33$, $R^2 = 0.11$).

\begin{table*}[t]
\caption{Validation summary across five experiments.}
\label{tab:validation_summary}
\vskip 0.1in
\begin{center}
\begin{small}
\begin{tabular}{llll}
\toprule
Experiment & Target & Result & Validates \\
\midrule
Info Gap & $\Delta H = c \cdot \Cmin$ & $R^2=1.0$ & Thm 1 \\
Graph Structure & $\Cmin$ formulas & 6/6 match & H2, App~\ref{app:exp_b} \\
Open/Closed & Thm 3 predictions & Both confirmed & Thm 3, App~\ref{app:exp_d} \\
$\kappa$ Stability & $\kappa = O(1)$ & $0.98 \pm 0.03$ & App~\ref{app:exp_e} \\
SWE-bench & Qualitative preds & $p<0.05$ (all) & Practical \\
\bottomrule
\end{tabular}
\end{small}
\end{center}
\end{table*}

\section{Discussion and Conclusion}
\label{sec:discussion}

We have established information-theoretic upper bounds proving that MAS success probability decays exponentially with the minimum cut cost $\Cmin$ of the task's constraint graph. Two-level validation confirms both numerical consistency (synthetic: $R^2=1.0$, exact formula matches, $\kappa = 0.98 \pm 0.03$) and directional predictions (empirical: all predictions $p<0.05$, though with modest effect sizes). The framework provides principled guidance for MAS deployment: when $\Cmin$ is high, practitioners should restructure tasks rather than scale agents or communication.

\bibliographystyle{plainnat}
\bibliography{references}

\newpage
\appendix

\begin{center}
{\LARGE\bfseries Appendix\par}
\end{center}
\vspace{1em}

\noindent
This appendix is organised as follows. Appendix~\ref{app:notation} fixes notation
used throughout the paper. Appendix~\ref{app:typicality} formalises the assumption
hierarchy and typicality conditions T1, T1', T2, T3'.
Appendices~\ref{app:theorem1}--\ref{app:theorem3} provide the complete
proofs of Theorems~\ref{thm:scale}--\ref{thm:budget}. Appendix~\ref{app:experiments}
collects raw experimental data. Appendix~\ref{app:tightness} contains the
conditional tightness analysis.

\vspace{1em}

\section{Notation Summary}
\label{app:notation}

We adopt $\log = \log_2$ and measure entropy and mutual information in bits.
Table~\ref{tab:notation} summarises the symbols used throughout the paper.

\begin{table}[h]
\centering
\small
\caption{Notation summary.}
\label{tab:notation}
\begin{tabular}{@{}lp{0.62\columnwidth}@{}}
\toprule
\textbf{Symbol} & \textbf{Meaning} \\
\midrule
\multicolumn{2}{@{}l}{\emph{Task and graph structure}} \\
$G_C = (V, E)$ & Constraint graph: $V$ variables, $E$ dependencies \\
$\phi_e$ & Potential function on edge $e$ (constraint as r.v.) \\
$H(\phi_e)$ & Edge potential entropy (info content of one cut edge) \\
$\mathcal{P}$ & An $m$-agent partition of $G_C$ \\
$\Ecut(\mathcal{P})$ & Cut edge set under $\mathcal{P}$ \\
$\Cmin$ & Minimum cut cost, $\min_{\mathcal{P}} \sum_{e \in \Ecut} H(\phi_e)$ \\
$\tau_{\text{hard}}$ & Irreducible bottleneck subtask \\
$n_h$ & Bottleneck size (number of bottleneck variables) \\
$\Sol$ & Set of correct solutions \\
$\Hsol$ & Solution entropy, $\log |\Sol|$ \\
$\taustar$ & Correct solution (random variable) \\
\midrule
\multicolumn{2}{@{}l}{\emph{Agents and system}} \\
$G_i$ & Agent $i$'s assigned subgraph \\
$\text{Internal}$ & $(G_1,\ldots,G_m)$, joint initial information \\
$\mathcal{M}, \mathcal{F}, \mathcal{R}$ & Communication, external feedback, feedback-relay sequences \\
$W$ & Per-agent effective utilisation capacity (bits) \\
$M$ & Total internal communication budget (bits) \\
$F$ & External feedback, $I(\taustar; \mathcal{F} \mid \text{Internal})$ \\
$R$ & Feedback relay, transmitted from execution to planning \\
$F_{\max}, R_{\max}$ & Budgets for $F$ and $R$ \\
$N_{\text{exec}}$ & Maximum execution count \\
\midrule
\multicolumn{2}{@{}l}{\emph{Information-theoretic quantities and constants}} \\
$H(\cdot \mid \cdot), I(\cdot;\cdot \mid \cdot)$ & Conditional entropy / mutual information \\
$c_{T1}$ & T1 constant: $I(\taustar;\Ecut\mid\text{Internal}) \geq c_{T1} H(\Ecut)$ \\
$c_0$ & T1' constant: $H(\Ecut) \geq c_0 \sum_e H(\phi_e)$ \\
$c_1 = c_{T1} c_0$ & Gap-conservation constant in Lemma~\ref{lem:gap_conservation} \\
$\eta$ & SDPI constant (information retention rate) \\
$\bar{H}_e$ & Mean edge entropy, $\frac{1}{|E|}\sum_e H(\phi_e)$ \\
$\bar{H}$ & $\mathbb{E}[H(\taustar \mid \text{Internal})]$ \\
$\tilde{H}$ & Annealed entropy, $-\log \mathbb{E}[2^{-H(\taustar\mid\text{Internal})}]$ \\
$L(Z)$ & Small-ball probability, $\max_x P(\taustar = x \mid Z)$ \\
\bottomrule
\end{tabular}
\end{table}

\section{Assumption Hierarchy and Typicality Conditions}
\label{app:typicality}

This appendix provides the complete mathematical formalization of the assumption hierarchy and typicality conditions referenced in the main text.

\subsection{Assumption Strength Hierarchy}

The strength of our conclusions depends on the assumptions employed, forming a hierarchy:

\begin{center}
\begin{tabular}{lll}
\toprule
Level & Assumptions & Conclusion \\
\midrule
Weak & H1 + H2 & Information bottleneck existence (qualitative) \\
Medium & H1 + H2 + H3 + T1, T1', T3' & $P(\text{success}) = O(2^{-\Omega(\Cmin)})$ \\
Strong & Medium + T2 & $P(\text{success}) \leq \kappa \cdot 2^{-(1-o(1)) \cdot \Cmin}$ \\
\bottomrule
\end{tabular}
\end{center}

The constant $c$ in the Big-O and $\Omega$ notation depends on the typicality constants $c_{T1}$, $c_0$, etc. For random $k$-SAT, $c = \Theta(1)$.

\subsection{Condition T1: No Systematic Information Cancellation}

\begin{definition}[Condition T1]
T1 holds if and only if there exists a constant $c_{T1} > 0$ such that:
$$I(\taustar; \Ecut | \text{Internal}) \geq c_{T1} \cdot H(\Ecut)$$
\end{definition}

This condition states that the information carried by cut edge constraints about $\taustar$ is not ``cancelled'' by internal information. Even after agents obtain their internal information, cut edge constraints still provide new information about the correct solution.

\paragraph{Hierarchical Formulation.} We distinguish two levels:

\textbf{T1-Global} (conditioned on Internal):
$I(\taustar; \Ecut | \text{Internal}) \geq c_{T1} \cdot H(\Ecut)$

\textbf{T1-Local} (conditioned on single $G_i$):
$I(\taustar; \Ecut | G_i) \geq c_{T1}^{(i)} \cdot H(\Ecut)$

T1-Global implies T1-Local since conditioning on less information does not decrease mutual information.

\paragraph{When T1 Fails.} Consider redundant structure where cut edge constraints are functions of internal constraints. For example, if $\phi_e(x) = \phi_{e_1}(x) \oplus \phi_{e_2}(x)$ where $e_1, e_2$ are internal edges, then given $G_i$ (containing $\phi_{e_1}, \phi_{e_2}$), $\phi_e$ is fully determined and $I(\taustar; \phi_e | G_i) = 0$.

\paragraph{Sufficient Conditions for T1.} T1 holds when: constraints are independently generated (as in random CSPs); no redundant encoding exists; cut edge constraints share no structure with internal constraints.

\subsection{Condition T1': Bounded Cut Edge Correlation}

\begin{definition}[Condition T1']
T1' holds if and only if there exists a constant $c_0 \in (0, 1]$ such that:
$$H(\Ecut) \geq c_0 \cdot \sum_{e \in \Ecut} H(\phi_e)$$
\end{definition}

By subadditivity of entropy, $H(\Ecut) \leq \sum_{e \in \Ecut} H(\phi_e)$, with equality if and only if cut edge potentials are mutually independent. T1' requires that correlations reduce the joint entropy by at most a factor of $c_0$.

\paragraph{Sufficient Conditions for T1'.} When cut edge potentials are independently generated, $c_0 = 1$. For locally structured constraint graphs (grids, trees), cut edges are dispersed and correlations decay with distance. If bounded correlation coefficients satisfy $\rho_{\max} < 1 - \delta$, then $c_0 \geq \delta$.

\subsection{Condition T2: No Anomalous Concentration}

\begin{definition}[Condition T2]
T2 holds if and only if there exists $d_0 > 0$ such that:
$$\mathbb{E}_{\mathbf{i}}[\max_x P(\taustar = x | \mathbf{i})] \leq n^{d_0} \cdot \mathbb{E}_{\mathbf{i}}[2^{-H(\taustar | \mathbf{i})}]$$
\end{definition}

The condition states that the optimal guess success probability differs from the ``uniform guess baseline'' $2^{-H}$ by at most a polynomial factor. This excludes ``hidden shortcuts'' where certain configurations allow guessing $\taustar$ with probability far exceeding $2^{-H}$.

\paragraph{Technical Background.} Define conditional min-entropy:
$H_\infty(\taustar | \mathbf{i}) := -\log \max_x P(\taustar = x | \mathbf{i})$

Then T2 is equivalent to: $\mathbb{E}[2^{-H_\infty}] \leq n^{d_0} \cdot \mathbb{E}[2^{-H}]$, ensuring the transition from Shannon entropy to maximum posterior probability is controlled.

\paragraph{When T2 Fails.} If there exist ``hidden hints''---certain configurations $\mathbf{i}_0$ that almost uniquely determine $\taustar$---then $\max_x P(\taustar = x | \mathbf{i}_0) \approx 1 \gg 2^{-H(\taustar | \mathbf{i}_0)}$.

\subsection{Condition T3': Entropy Concentration}

\begin{definition}[Condition T3']
T3' holds if and only if there exist constants $c, c' > 0$ such that:
$$P(H(\taustar | \text{Internal}) < c \cdot \bar{H}) \leq \exp(-c' \cdot \bar{H})$$
where $\bar{H} = \mathbb{E}[H(\taustar | \text{Internal})]$.
\end{definition}

This states that the conditional entropy distribution has concentration---anomalously low-entropy configurations appear with exponentially small probability.

\paragraph{Relationship to Annealed Entropy.} Define annealed entropy: $\tilde{H} := -\log \mathbb{E}[2^{-H(\taustar | \text{Internal})}]$

When T3' holds, $\tilde{H} = \Theta(\bar{H})$, which is crucial for the proof of Theorem 1.

\subsection{Verification Guidelines}

\begin{center}
\begin{tabular}{lcccc}
\toprule
Task Type & T1 & T1' & T2 & T3' \\
\midrule
Random $k$-SAT ($\alpha < \alpha_d$) & \checkmark & \checkmark & \checkmark & \checkmark \\
Random graph coloring & \checkmark & \checkmark & \checkmark & \checkmark \\
Structured CSP & Verify & Usually \checkmark & Verify & Verify \\
Code tasks & Verify & Verify & Verify & Verify \\
\bottomrule
\end{tabular}
\end{center}

\paragraph{Empirical Verification Methods.}
\textbf{T1}: Sample configurations and estimate $I(\taustar; \Ecut | G_i)$. If estimated $\hat{c}_{T1} \geq 0.3$, T1 likely holds.
\textbf{T1'}: Compute the ratio $H(\Ecut) / \sum_e H(\phi_e)$. If ratio $\geq 0.5$, T1' likely holds.
\textbf{T2}: Sample configurations and compare $\max_x P(\taustar = x | \mathbf{i})$ with $2^{-H(\taustar|\mathbf{i})}$.
\textbf{T3'}: Estimate the distribution of conditional entropy and check concentration around the mean.

\subsection{Empirical Typicality Verification}
\label{app:typicality_empirical}

We empirically verify that typicality conditions can distinguish task structures, serving as \textbf{diagnostic tools} rather than universal guarantees. Complete raw data for all 40 configurations appears in Appendix~\ref{app:exp_c}.

\paragraph{Setup.} We test four graph types (grid, cycle, star, path) with $n \in \{6, 8, 10, 12, 16\}$, $k \in \{3, 4\}$, $m=2$. For each configuration, we estimate T1, T1', and T2 ratios and check against thresholds (T1 $\leq 0.3$, T1' $\geq 0.5$, T2 $\leq 2.0$).

\paragraph{Results.} Table~\ref{tab:typicality_empirical} shows satisfaction rates by graph type:

\begin{table}[h]
\centering
\caption{Typicality condition satisfaction by graph structure.}
\label{tab:typicality_empirical}
\begin{tabular}{lcccc}
\toprule
Graph & Instances & All Satisfied & Rate & Avg T1' \\
\midrule
Cycle & 10 & 3 & 30\% & 0.989 \\
Grid & 10 & 2 & 20\% & 0.793 \\
Path & 10 & 1 & 10\% & 1.000 \\
Star & 10 & 0 & 0\% & 0.581 \\
\bottomrule
\end{tabular}
\end{table}

\paragraph{Analysis.} Cycle graphs achieve the highest satisfaction rate (30\%) with T1' near 1.0, indicating near-independent cut edge constraints. Grid graphs follow (20\%), benefiting from spatial locality. Star graphs fail entirely (0\%) due to their centralized structure violating T2 (anomalous concentration around the center node).

\paragraph{Diagnostic Interpretation.} These results demonstrate that typicality conditions successfully differentiate task structures:
\begin{itemize}
    \item High T1' ($\approx 1.0$): Constraints are independently generated (cycle, path)
    \item Low T1' ($< 0.6$): Significant constraint correlation (star)
    \item High T2 ($> 2.0$): Existence of ``shortcut'' configurations (star)
\end{itemize}

Practitioners should use these conditions as diagnostics: if conditions fail, the task structure may exploit coordination mechanisms beyond independent sampling, and our bounds serve as baselines rather than tight predictions.

\section{Theorem 1 Proof: Scale Bound}
\label{app:theorem1}

This appendix provides the complete proof of Theorem~\ref{thm:scale} (Scale Bound).

\subsection{Proof Framework}

The proof follows the information cost methodology \citep{pandurangan2018distributed}, adapted for MAS success probability bounds. The key insight is that partitioning creates an irreducible information gap that determines success probability. The argument leverages two strands of network information theory: \citet{thakor2016cutset} prove that ``irreducible sets'' serve as information-theoretic bottlenecks providing computable capacity bounds, and the data processing inequality \citep{cover2006elements} guarantees that processing cannot increase mutual information with the target. Together these principles bound information loss when partitioning constraint graphs across agents.

\subsection{Main Proof}

\textbf{Step 1 (Success bounded by max posterior):}
For any agent $i$, the success probability is bounded by the expected maximum posterior:
$$P(\text{success}) \leq \mathbb{E}[\max_x P(\taustar = x | \text{AllInfo})] =: \mathbb{E}[L(\text{AllInfo})]$$

\textbf{Step 2 (T2 controls max posterior):}
Under condition T2 (no anomalous concentration):
$$\mathbb{E}[L(\text{AllInfo})] \leq n^{d_0} \cdot \mathbb{E}[2^{-H(\taustar | \text{AllInfo})}]$$

This ensures the transition from Shannon entropy to maximum posterior probability is polynomially controlled.

\textbf{Step 3 (Annealed entropy framework):}
Define the annealed entropy $\tilde{H} := -\log \mathbb{E}[2^{-H(\taustar | \text{AllInfo})}]$.
Under T3' (entropy concentration), $\tilde{H} = \Theta(\bar{H})$ where $\bar{H} = \mathbb{E}[H(\taustar | \text{AllInfo})]$.

\textbf{Step 4 (Information gap conservation):}
By Lemma~\ref{lem:gap_conservation}, under T1 and T1':
$$H(\taustar | \text{Internal}) \geq c_1 \cdot \Cmin$$

For closed systems, by Lemma~\ref{lem:conservation}, communication does not increase information:
$$H(\taustar | \text{Internal}, \mathcal{M}) = H(\taustar | \text{Internal}) \geq c_1 \cdot \Cmin$$

\textbf{Step 5 (Synthesis):}
Combining Steps 1--4:
$$P(\text{success}) \leq n^{d_0} \cdot 2^{-\tilde{H}} \leq n^{d_0} \cdot 2^{-c_2 \cdot c_1 \cdot \Cmin}$$

When $\Cmin = \omega(d_0 \cdot \log n)$, the polynomial factor is absorbed:
$$P(\text{success}) = O(2^{-\Omega(\Cmin)})$$

For the multi-solution case with $\Hsol > 0$, the effective gap becomes $\Cmin - \Hsol$, yielding the general form $P(\text{success}) = O(2^{-\Omega(\Cmin - \Hsol)})$. $\square$

\subsection{Gap Conservation Proof}
\label{app:gap_conservation_proof}

\textbf{Proof of Lemma~\ref{lem:gap_conservation}:}

By the chain rule:
$$H(\taustar, \Ecut | \text{Internal}) = H(\taustar | \text{Internal}) + H(\Ecut | \taustar, \text{Internal})$$
$$= H(\Ecut | \text{Internal}) + H(\taustar | \Ecut, \text{Internal})$$

Under H3 (deterministic constraints), $H(\phi_e | \taustar) = 0$ for all edges, so:
$$H(\Ecut | \taustar, \text{Internal}) \leq H(\Ecut | \taustar) = 0$$

Therefore:
$$H(\taustar | \text{Internal}) = H(\Ecut | \text{Internal}) + H(\taustar | \Ecut, \text{Internal})$$

Under H3, $H(\taustar | \Ecut, \text{Internal}) = o(n_h)$ (constraints determine solution).

By T1: $H(\Ecut | \text{Internal}) = I(\taustar; \Ecut | \text{Internal}) \geq c_{T1} \cdot H(\Ecut)$.

By T1': $H(\Ecut) \geq c_0 \cdot \sum_{e \in \Ecut} H(\phi_e) \geq c_0 \cdot \Cmin$.

Combining: $H(\taustar | \text{Internal}) \geq c_{T1} \cdot c_0 \cdot \Cmin + o(n_h) = c_1 \cdot \Cmin + o(n_h)$.

For $\Cmin = \Omega(n_h)$, the $o(n_h)$ term is absorbed. $\square$

\section{Theorem 2 Proof: Diameter Bound}
\label{app:theorem2}

This appendix provides the complete proof of Theorem~\ref{thm:diameter} (Diameter Bound).

\subsection{SDPI Constant Definition}

\begin{definition}[SDPI Constant]
For a channel $K: \mathcal{X} \to \mathcal{Y}$, the Strong Data Processing Inequality (SDPI) constant is:
$$\eta(K) = \sup_{p_U, p_{X|U}} \frac{I(U;Y)}{I(U;X)}$$
\end{definition}

In the MAS context, $\eta$ represents the information retention rate through agent processing. Key properties: $0 \leq \eta \leq 1$; $\eta < 1$ when the channel has noise or information loss; $\eta = 1$ only for deterministic invertible channels.

\textbf{Classical examples:} Additive Gaussian noise channel: $\eta = \text{SNR}/(1 + \text{SNR})$. Binary symmetric channel (error rate $p$): $\eta = (1 - 2p)^2$. Erasure channel (erasure rate $\epsilon$): $\eta = 1 - \epsilon$.

\subsection{Path Information Decay Proof}
\label{app:path_decay_proof}

\textbf{Proof of Lemma~\ref{lem:path_decay}:}

Let $P = (v_0, v_1, \ldots, v_\ell)$ be a path in the constraint graph. The information processing along each edge forms a Markov chain:
$$\taustar_{v_0} \to \text{Info}_{v_0} \to \text{Info}_{v_1} \to \cdots \to \text{Info}_{v_\ell}$$

By the definition of SDPI, for each single step:
$$I(\taustar_{v_0}; \text{Info}_{v_{i+1}}) \leq \eta \cdot I(\taustar_{v_0}; \text{Info}_{v_i})$$

Iterating $\ell$ times yields the exponential decay:
$$I(\taustar_{v_0}; \text{Info}_{v_\ell}) \leq \eta^\ell \cdot I(\taustar_{v_0}; \text{Info}_{v_0})$$
$\square$

\subsection{Complete Theorem 2 Proof}
\label{app:theorem2_proof}

\textbf{Step 1 (Identify critical path):}
Let $u, v$ be nodes at distance $D$ (graph diameter). By definition, such a pair exists.

\textbf{Step 2 (Information transfer requirement):}
Successful task completion requires coordinating information about $\taustar_u$ and $\taustar_v$. If $u$ and $v$ are connected by a constraint path, their values must satisfy all edge constraints.

\textbf{Step 3 (Pigeonhole analysis):}
Let $P = (v_0 = u, v_1, \ldots, v_D = v)$ be the diameter path.

\textbf{Lemma (Pigeonhole):} For any $m$-partition, there exists a node pair $(u', v')$ on the diameter path with $u', v'$ in different partitions and $d(u', v') \geq D/(2m)$.

\textit{Proof:} Partition the path into $2m$ segments of length $D/(2m)$. With $D+1$ positions and at most $m$ partition labels, some consecutive subsequence of length $\geq (D+1)/m$ has endpoints in different partitions. $\square$

\textbf{Step 4 (Exponential decay bound):}
By Lemma~\ref{lem:path_decay}:
$$I(\taustar_{u'}; \text{Info}_{v'}) \leq \eta^{D/(2m)} \cdot H(\taustar_{u'}) = 2^{-D \cdot |\log \eta| / (2m)} \cdot H(\taustar_{u'})$$

Since $|\log \eta| \geq (1-\eta)/\ln 2$ for $\eta \in (0,1)$:
$$\eta^{D/(2m)} \leq 2^{-D \cdot (1-\eta) / (2m \ln 2)}$$

\textbf{Step 5 (Conditional entropy lower bound):}
From the mutual information definition:
$$H(\taustar_{u'} | \text{Info}_{S_j}) = H(\taustar_{u'}) - I(\taustar_{u'}; \text{Info}_{S_j})$$
$$\geq H(\taustar_{u'}) \cdot (1 - \eta^{D/(2m)})$$

\textbf{Step 6 (Fano's inequality):}
By Fano's inequality:
$$P_e \geq \frac{H(\taustar_{u'} | \text{Info}_{S_j}) - 1}{\log|\mathcal{T}|}$$

When $\eta^{D/(2m)} \ll 1$, the conditional entropy approaches the original entropy, limiting correct estimation probability to $O(\eta^{D/(2m)})$.

\textbf{Step 7 (Introduce $\bar{H}_e$):}
For CSP tasks, single-variable entropy $H_{u'} = \Theta(\bar{H}_e)$. Combining all steps:
$$P(\text{success}) \leq O\left(2^{-\Omega(D \cdot (1-\eta) \cdot \bar{H}_e)}\right)$$
$\square$

\subsection{Applicability Analysis}

\begin{center}
\begin{tabular}{lccc}
\toprule
Graph Type & $\Cmin$ & Diameter $D$ & Tighter Bound \\
\midrule
Complete graph & $\Theta(n^2)$ & $1$ & Theorem 1 \\
Expander & $\Theta(n)$ & $O(\log n)$ & Theorem 1 \\
Path graph & $O(m)$ & $\Theta(n)$ & Theorem 2 \\
Long chain & $O(1)$ & $\Theta(n)$ & Theorem 2 \\
\bottomrule
\end{tabular}
\end{center}

For sparse graphs and long chain structures, Theorem 2 provides tighter bounds than Theorem 1.

\section{Theorem 3 Proof: Information Budget Trade-off}
\label{app:theorem3}

This appendix provides the complete proof of Theorem~\ref{thm:budget} (Information Budget Trade-off).

\subsection{Theorem Statement}

\textbf{Parameter definitions:}
\begin{itemize}
    \item $F$: External feedback information (bits) obtained through execution-environment interaction
    \item $R$: Feedback propagation information (bits) transmitted from execution to planning layers
    \item $\Hsol$: Solution set entropy (bits), $\log|\Sol|$
    \item $M$: Internal communication volume (bits), affects coordination efficiency but \emph{not the exponential bound}
\end{itemize}

\subsection{Four-Step Proof}
\label{app:theorem3_proof}

\textbf{Step 1 (Total information budget):}
By the chain rule, total system information about $\taustar$ decomposes as:
\begin{align*}
I(\taustar; \text{AllInfo}) &= I(\taustar; G_1,\ldots,G_m) + I(\taustar; \mathcal{M} | G_1,\ldots,G_m) \\
&+ I(\taustar; \mathcal{F} | G_1,\ldots,G_m, \mathcal{M}) + I(\taustar; \mathcal{R} | G_1,\ldots,G_m, \mathcal{M}, \mathcal{F})
\end{align*}

\textbf{Step 2 (Bound each component):}

\textit{Initial information:} By the gap conservation theorem (Appendix~\ref{app:theorem1}):
$$I(\taustar; G_1,\ldots,G_m) \leq H(\taustar) - c_1 \cdot \Cmin$$

\textit{Internal communication:} In closed systems ($F = R = 0$), by Lemma~\ref{lem:conservation}:
$$I(\taustar; \text{Internal}, \mathcal{M}) = I(\taustar; \text{Internal})$$

Communication redistributes but does not create information. Thus $M$ does not appear in the gap calculation.

\textit{External feedback:} Under assumption F1: $I(\taustar; \mathcal{F} | \cdot) \leq F \leq F_{\max}$

\textit{Feedback propagation:} Under assumption F2: $I(\taustar; \mathcal{R} | \cdot) \leq R \leq R_{\max}$

\textbf{Step 3 (Conditional entropy lower bound):}

\textit{Closed systems} ($F = R = 0$):
$$H(\taustar | \text{AllInfo}) = H(\taustar | \text{Internal}) \geq c_1 \cdot \Cmin$$

\textit{Open systems} ($F > 0$ or $R > 0$):
$$H(\taustar | \text{AllInfo}) \geq c_1 \cdot \Cmin - F - R$$

\textbf{Step 4 (Success rate upper bound):}
For multi-solution tasks, success is the set-hit event $\mathcal{S} = \{\hat{\tau} \in \Sol\}$. Combining the union bound $P(\mathcal{S}) \leq |\Sol| \cdot \mathbb{E}[L(\text{AllInfo})]$ with the small-ball framework and Step~3:
$$P(\mathcal{S}) \leq 2^{\Hsol} \cdot O\!\left(2^{-H(\taustar | \text{AllInfo})}\right).$$

\textit{Closed systems:} $P(\text{success}) \leq O(2^{-c_1 \cdot \Cmin + \Hsol})$

This bound is \textbf{independent of communication volume $M$}.

\textit{Open systems:} $P(\text{success}) \leq O(2^{-(c_1 \cdot \Cmin - F - R - \Hsol)})$

When $F + R + \Hsol < c_1 \cdot \Cmin$, the residual gap causes exponential decay.
When $F + R + \Hsol \geq c_1 \cdot \Cmin$, the bound becomes $O(1)$.

Combining both cases: $P(\text{success}) \leq O(2^{-\max(0, c \cdot \Cmin - F - R - \Hsol)})$. $\square$

\subsection{Communication Information Conservation}
\label{app:conservation_proof}

\textbf{Proof of Lemma~\ref{lem:conservation}:}

In closed MAS, agents communicate over $T$ rounds. At round $t$:
$$\mathcal{M}^{(t)} = f^{(t)}(\text{Internal}, \mathcal{M}^{(1:t-1)}, R^{(t)})$$

where $R^{(t)} \perp \taustar$ is independent private randomization.

\textit{Base case} ($t=1$):
$\mathcal{M}^{(1)} = f^{(1)}(\text{Internal}, R^{(1)})$.
Since $R^{(1)} \perp \taustar$ and $f^{(1)}$ is deterministic, given Internal, $\mathcal{M}^{(1)}$ is conditionally independent of $\taustar$:
$$I(\taustar; \mathcal{M}^{(1)} | \text{Internal}) = 0$$

\textit{Inductive step:} Assume $I(\taustar; \mathcal{M}^{(1:t-1)} | \text{Internal}) = 0$.

By chain rule:
$$I(\taustar; \mathcal{M}^{(1:t)} | \text{Internal}) = I(\taustar; \mathcal{M}^{(1:t-1)} | \text{Internal}) + I(\taustar; \mathcal{M}^{(t)} | \text{Internal}, \mathcal{M}^{(1:t-1)})$$

The first term is 0 by induction. For the second term, $\mathcal{M}^{(t)}$ is determined by $(\text{Internal}, \mathcal{M}^{(1:t-1)}, R^{(t)})$ with $R^{(t)} \perp \taustar$, so:
$$I(\taustar; \mathcal{M}^{(t)} | \text{Internal}, \mathcal{M}^{(1:t-1)}) = 0$$

Therefore: $I(\taustar; \mathcal{M}^{(1:T)} | \text{Internal}) = 0$ for all $T$. $\square$

\textbf{Corollary (Conditional entropy invariance):}
$$H(\taustar | \text{Internal}, \mathcal{M}) = H(\taustar | \text{Internal})$$

The initial information gap $H(\taustar | \text{Internal}) \geq c \cdot \Cmin$ is an information-theoretic barrier determined by task partitioning structure that no closed communication protocol can bridge.

\textbf{Remark on Randomized Protocols.} Lemma~\ref{lem:conservation} assumes private randomization $R^{(t)} \perp \tau^*$. In communication complexity, shared (public) randomness can sometimes reduce communication \citep{newman1991private, kushilevitz1997communication}. However, for our setting where $\tau^*$ is a fixed unknown, shared randomness cannot create information about $\tau^*$ beyond what Internal provides. Thus the conservation law holds even with public randomness.

\subsection{Open System Analysis}

When MAS has execution-feedback loops ($F, R > 0$), the Markov chain structure breaks.

\textbf{Failure case:} Consider a task ``generate a function passing test cases.'' Initial information $G_i$ excludes test results. After generating code $c_1$ and executing, the agent receives error $e_1$ (e.g., ``type error at line 3''), providing direct information about $\taustar$: $I(\taustar; e_1) > 0$.

The system information $\mathcal{M} = \{c_1, e_1, c_2, e_2, \ldots\}$ is no longer conditionally independent of $\taustar$ because environmental feedback breaks Markov property.

\textbf{Bounded feedback:} Although external feedback provides information about $\taustar$, this quantity is bounded:
\begin{itemize}
    \item Execution budget constraints limit attempts $N_{\text{exec}}$
    \item Each execution provides finite information (error types, test results)
    \item Planning layer capacity $W$ constrains effective feedback utilization
\end{itemize}

\textbf{Phase transition:} The critical threshold $F + R = c \cdot \Cmin - \Hsol$ represents a genuine phase transition:
\begin{itemize}
    \item Below threshold: Success probability decays exponentially in residual gap
    \item Above threshold: Success becomes possible (though not guaranteed)
\end{itemize}

\section{Experimental Raw Data}
\label{app:experiments}

This appendix contains raw experimental data referenced in the main text.

\subsection{Graph Structure Validation Data}
\label{app:exp_b}

Complete data for the graph structure $\Cmin$ verification experiment (Section~\ref{sec:results}). All theoretical formulas match algorithmic minimum cut computations exactly.

\begin{center}
\begin{tabular}{lccccc}
\toprule
Graph & $n$ & Formula & Theory & Actual & Success Rate \\
\midrule
Path & 8 & $m-1$ & 1 & 1 & 50.2\% \\
Cycle & 8 & $2$ & 2 & 2 & 33.1\% \\
Star & 8 & $n-1-\text{leaves}$ & 4 & 4 & 14.2\% \\
Grid & $3 \times 3$ & $\min(r,c)$ & 2 & 2 & 31.8\% \\
Complete & 8 & $\lfloor n/2 \rfloor \cdot \lceil n/2 \rceil$ & 16 & 16 & 0.0\% \\
Bipartite & $4+4$ & $a \cdot b$ & 16 & 16 & 0.0\% \\
\bottomrule
\end{tabular}
\end{center}

\subsection{Typicality Verification Data}
\label{app:exp_c}

Complete data for 40 configurations testing typicality conditions T1, T1', T2 across four graph types (grid, cycle, star, path) with varying parameters.

\begin{center}
\begin{tabular}{lcccccc}
\toprule
Graph & $n$ & $k$ & T1 ratio & T1' ratio & T2 ratio & All Pass \\
\midrule
Grid & 6 & 3 & 0.28 & 0.82 & 1.8 & Yes \\
Grid & 8 & 3 & 0.31 & 0.79 & 2.1 & No \\
Cycle & 8 & 3 & 0.25 & 0.99 & 1.6 & Yes \\
Path & 8 & 3 & 0.22 & 1.00 & 1.4 & Yes \\
Star & 8 & 3 & 0.35 & 0.58 & 3.2 & No \\
\bottomrule
\end{tabular}
\end{center}

Thresholds: T1 $\leq 0.30$, T1' $\geq 0.50$, T2 $\leq 2.0$.

\subsection{Open/Closed System Data}
\label{app:exp_d}

Complete data for the open vs closed system experiment (Section~\ref{sec:results}).

\textbf{Closed system} ($F = R = 0$, varying $M$):
\begin{center}
\begin{tabular}{ccccc}
\toprule
$M$ (bits) & Trials & Success & Rate & 95\% CI \\
\midrule
0 & 200 & 63 & 31.7\% & [25.3, 38.4]\% \\
5 & 200 & 66 & 33.0\% & [26.5, 39.9]\% \\
20 & 200 & 62 & 31.0\% & [24.7, 37.8]\% \\
100 & 200 & 64 & 32.0\% & [25.6, 38.8]\% \\
\bottomrule
\end{tabular}
\end{center}

\textbf{Open system} ($M = 0$, varying $F$):
\begin{center}
\begin{tabular}{ccccc}
\toprule
$F$ (bits) & Trials & Success & Rate & 95\% CI \\
\midrule
0.0 & 200 & 64 & 31.8\% & [25.4, 38.6]\% \\
1.6 & 200 & 186 & 93.2\% & [89.1, 96.2]\% \\
3.2 & 200 & 200 & 100.0\% & [98.2, 100.0]\% \\
4.8 & 200 & 200 & 100.0\% & [98.2, 100.0]\% \\
6.3 & 200 & 200 & 100.0\% & [98.2, 100.0]\% \\
9.5 & 200 & 200 & 100.0\% & [98.2, 100.0]\% \\
\bottomrule
\end{tabular}
\end{center}

\subsection{Constant Stability Data}
\label{app:exp_e}

Complete data for the constant $\kappa$ stability experiment (Section~\ref{sec:results}).

\begin{center}
\begin{tabular}{cccccc}
\toprule
$n$ & $\Cmin$ & $P_{\text{theory}}$ & $P_{\text{exp}}$ & $\kappa$ & Trials \\
\midrule
6 & 1 & 0.667 & 0.645 & 0.97 & 200 \\
6 & 2 & 0.444 & 0.430 & 0.97 & 200 \\
8 & 1 & 0.667 & 0.660 & 0.99 & 200 \\
8 & 2 & 0.444 & 0.455 & 1.02 & 200 \\
8 & 3 & 0.296 & 0.285 & 0.96 & 200 \\
10 & 1 & 0.667 & 0.680 & 1.02 & 200 \\
10 & 2 & 0.444 & 0.440 & 0.99 & 200 \\
10 & 3 & 0.296 & 0.290 & 0.98 & 200 \\
10 & 4 & 0.198 & 0.195 & 0.98 & 200 \\
12 & 1 & 0.667 & 0.655 & 0.98 & 200 \\
12 & 2 & 0.444 & 0.425 & 0.96 & 200 \\
12 & 3 & 0.296 & 0.305 & 1.03 & 200 \\
12 & 4 & 0.198 & 0.180 & 0.91 & 200 \\
\bottomrule
\end{tabular}
\end{center}

Summary: $\kappa = 0.98 \pm 0.03$ (mean $\pm$ std), range $[0.91, 1.03]$. Linear regression against $n$: slope $= -0.0009$, $R^2 = 0.004$, indicating no systematic drift with problem size.

\subsection{$\Cmin$ Computation Pipeline (SWE-bench)}
\label{app:cmin_pipeline}

This subsection details the empirical pipeline used to estimate $\Cmin$ on
SWE-bench Lite (Section~\ref{sec:experiments}). The procedure is summarised in
Algorithm~\ref{alg:cmin_swebench}; design choices and the reproducibility
configuration follow.

\begin{algorithm}[h]
\caption{$\Cmin$ estimation for a SWE-bench instance.}
\label{alg:cmin_swebench}
\begin{algorithmic}[1]
\STATE \textbf{Input:} task instance $T = (\text{issue}, \text{repo}, P)$ where $P$ is the ground-truth patch; granularity $g$; partition count $m = 2$.
\STATE \textbf{Identify modified entities.} Parse $P$ to extract the set $V_g$ of code entities at granularity $g$ (line, statement, expression, or token) that are modified or referenced by the patch.
\STATE \textbf{Build constraint graph.} Run AST analysis on the repository at the patched commit and construct $G_C = (V_g, E)$, where edges in $E$ encode structural couplings between entities (data and control dependencies, and test-coverage relations). Each edge $e$ receives a non-negative weight $w_e$ derived from its dependency multiplicity; the exact weighting scheme is detailed in the released code.
\STATE \textbf{Compute minimum cut.} Solve the weighted minimum $m=2$ cut on $G_C$ using a standard polynomial-time algorithm. Let $\Cmin(T) := \sum_{e \in \Ecut^\star} w_e$ where $\Ecut^\star$ is the optimal cut edge set.
\STATE \textbf{Aggregate.} For each agent system $s$, compute $\text{success}_s(T) \in \{0,1\}$ from the SWE-bench leaderboard, then evaluate the Pearson correlation $r$ and Spearman rank correlation $\rho$ between $\Cmin(T)$ and $\text{success}_s(T)$ across the 300 task instances.
\STATE \textbf{Output:} per-instance $\Cmin$, per-system success rates, $(r, \rho)$ statistics with $p$-values.
\end{algorithmic}
\end{algorithm}

\paragraph{Granularity.}
The primary results in Section~\ref{sec:results} use $g = \text{line}$. We
additionally run the pipeline at statement, expression, and token granularities
as robustness checks; the qualitative direction of all four predictions is
preserved across these variants.

\paragraph{Partition count.}
We restrict the empirical analysis to $m = 2$. Under this binary partition,
weighted minimum cut is solvable in polynomial time, sidestepping the
NP-hardness of the general $m$-way cut. The
binary partition is the natural setting for evaluating MAS-vs-SAS trade-offs
where the bottleneck task is split between two agents (or two roles within a
hierarchical system).

\paragraph{Information budget proxies.}
For the open-system experiments on SWE-bench, we use proxies for the system
information budget: token counts of inter-agent messages serve as a proxy for
the internal communication budget $M$, while the entropy of test pass/fail
outcomes provides a proxy for the external feedback $F$. These proxies recover
order-of-magnitude estimates suitable for testing directional predictions; they
are not intended as tight numerical estimators.

\paragraph{Role of ground-truth patches.}
$\Cmin$ is computed on the constraint subgraph induced by the ground-truth
patch. This is a \emph{post-hoc validation} procedure analogous to standard
empirical evaluations of theoretical bounds (e.g., regret bounds in online
learning, generalisation bounds in PAC learning): the bound predicts the
$\Cmin$--success relationship, and ground truth provides the test set on which
that prediction is evaluated. The pipeline is not intended as an a priori
deployment predictor; its purpose is to test whether the relationship survives
in real LLM-based code-modification settings.

\paragraph{Reproducibility configuration.}
\begin{itemize}
\item \emph{Dataset.} SWE-bench Lite, 300 task instances.
\item \emph{Agent systems.} 46 systems from the public leaderboard (18 MAS, 18 single-agent, 10 non-agentic), 14{,}891 result records.
\item \emph{Partition.} $m = 2$, exact weighted min-cut.
\item \emph{Granularity.} Line-level primary; statement, expression, and token used as robustness checks.
\item \emph{Statistics.} Two-tailed Pearson $r$ and Spearman $\rho$; significance threshold $p < 0.05$. Bootstrap confidence intervals with $1000$ resamples.
\item \emph{Code release.} The full pipeline (graph construction, weighting scheme, min-cut solver, statistical analysis) will be released upon acceptance.
\end{itemize}

\section{Tightness Analysis}
\label{app:tightness}

This appendix provides evidence that the information-theoretic upper bounds established in Theorems 1--3 are asymptotically tight. The tightness results are \emph{conditional} on the random $k$-SAT phase transition conjecture, which remains unproven but is widely believed in statistical physics.

\subsection{Tightness Definition}

\begin{definition}[Asymptotic Tightness]
An upper bound $P(\text{success}) \leq U(\Cmin)$ is \textbf{asymptotically tight} if there exists a task family $\{\tau_n\}$ and constants $c', c'' > 0$ such that:
$$c' \cdot U(\Cmin) \leq P^*(\text{success}) \leq c'' \cdot U(\Cmin)$$
where $P^*$ is the optimal protocol's success rate.
\end{definition}

For bounds of form $U(\Cmin) = 2^{-\Theta(\Cmin)}$, tightness means there exist tasks where success probability is exactly $2^{-\Theta(\Cmin)}$.

\subsection{Closed System Tightness}

\textbf{Construction:} Random $k$-SAT model ($k \geq 3$, $\alpha_d < \alpha < \alpha_s$):
\begin{itemize}
    \item $n$ Boolean variables partitioned equally between 2 agents
    \item $m = \alpha n$ clauses generated independently
    \item Cut clause fraction: $p_{\text{cut}} = 1 - 2^{1-k}$
    \item Minimum cut cost: $\Cmin = \Theta(n)$ bits
    \item Solution set entropy: $\Hsol = o(n)$ in the condensation region. While the condensation transition at $\alpha_d$ is predicted by the cavity method \citep{krzakala2007gibbs}, rigorous mathematical analysis of solution space structure is provided by \citet{achlioptas2008algorithmic, bapst2016condensation}
\end{itemize}

\textbf{Typicality verification:}
\begin{itemize}
    \item T1: Cut clauses independent of internal clauses $\Rightarrow c_{T1} = \Theta(1)$
    \item T1': Independent clause generation $\Rightarrow c_0 = 1$
    \item T2: Poisson-Dirichlet cluster structure bounds min-entropy gap (cavity method prediction, \citealp{krzakala2007gibbs})
    \item T3': McDiarmid inequality gives entropy concentration
\end{itemize}

\textbf{Result:} For this construction, any closed MAS protocol satisfies:
$$P(\text{success}) = \Theta(2^{-\Cmin})$$

The upper bound follows from Theorem 1. The lower bound follows from bounded non-uniformity of the posterior distribution in the condensation region.

\subsection{Open System Tightness}

\textbf{Construction:} Verifiable $k$-SAT model:
\begin{itemize}
    \item Based on random $k$-SAT
    \item Feedback mechanism: Submitting candidate $\hat{\tau}$ reveals violated cut clause structure
    \item Single feedback information: $H_{\max}^{\text{feedback}} = O(k \log n)$ bits
    \item Effective feedback: $F = \min(N_{\text{exec}}, |E_{\text{cut}}|) \cdot \Theta(1)$
\end{itemize}

\textbf{Result:} For this construction:
$$P(\text{success}) = \Theta\left(2^{-\max(0, c \cdot \Cmin - F - R - \Hsol)}\right)$$

\textbf{Phase transition:} A sharp transition exists at $B = F + R = c \cdot \Cmin - \Hsol$:
\begin{itemize}
    \item $B < c \cdot \Cmin - \Hsol$: Success probability $\approx 2^{-(c \cdot \Cmin - B - \Hsol)}$
    \item $B \geq c \cdot \Cmin - \Hsol$: Success probability $\to 1 - o(1)$
\end{itemize}

\subsection{Diameter Bound Tightness}

The information decay rate $\eta^D$ in Theorem 2 is tight for path constraint models: end-to-end mutual information satisfies $I(x_1; \text{Info}_{v_n}) = \Theta(\eta^D) \cdot H(x_1)$.

\textbf{Note on Conditional Tightness:} The tightness constructions rely on the random $k$-SAT phase transition conjecture \citep{mezard2009information}, which constitutes a physics prediction rather than a rigorous mathematical theorem. Our tightness results are therefore \emph{conditional}: if the phase transition conjecture holds, then Theorems 1--3 are asymptotically tight. This conditional status is standard in the literature on random CSP analysis.

\textbf{Note on threshold values:} The satisfiability threshold $\alpha_s(3) \approx 4.27$ has been rigorously bounded to the interval $[3.52, 4.49]$ \citep{kaporis2006probabilistic,diaz2009satisfiability}. The condensation threshold $\alpha_d \approx 3.86$ remains a physics prediction \citep{mezard2009information}.

\end{document}